\documentclass[11pt]{article}

\usepackage{amsthm}
\newtheorem{theorem}{Theorem}[section]
\newtheorem{lemma}[theorem]{Lemma}
\newtheorem{corollary}[theorem]{Corollary}

\theoremstyle{definition}
\newtheorem{definition}[theorem]{Definition}

\usepackage{microtype}
\usepackage{amsmath}
\usepackage{amsfonts}
\usepackage{amssymb}
\usepackage[round,authoryear]{natbib}
\allowdisplaybreaks

\usepackage[dvipsnames]{xcolor}
\usepackage{hyperref}
\usepackage{euscript}

\hypersetup{
  colorlinks=true,
  linkcolor=blue,
  citecolor=blue,
  urlcolor=blue,
}
\usepackage{algorithm}
\usepackage{algpseudocode}
\algtext*{EndFor}
\newcommand{\R}{\mathbf{R}}
\newcommand{\opt}{\mathrm{OPT}}
\renewcommand{\phi}{\varphi}
\newcommand{\eps}{\varepsilon}
\newcommand{\E}{\mathbf{E}}
\newcommand{\alg}{Algorithm~\ref{alg:noisy} }
\renewcommand{\H}{\EuScript{H}}
\newcommand{\U}{\EuScript{U}}
\newcommand{\F}{\EuScript{F}}
\newcommand{\A}{\EuScript{A}}

\newcommand{\one}{\mathbf{1}}
\newcommand{\high}{\mathrm{high}}
\newcommand{\midd}{\mathrm{mid}}
\newcommand{\low}{\mathrm{low}}
\newcommand{\Ev}{\EuScript{E}}
\renewcommand{\S}{\EuScript{S}}

\title{Noisy \(k\)-means++ is Not \emph{too} Noisy}

\author{
  Poojan Shah\\
 \small{ Department of Computer Science and Engineering}\\
  \small{Indian Institute of Technology Delhi}\\
  \small{\texttt{shahpoojan2004@gmail.com}}
}

\date{}

\begin{document}

\maketitle

\begin{abstract} \noindent \setlength{\parindent}{0pt}
The celebrated \(k\)-means++ algorithm of \citeauthor{av2007}~(SODA~\citeyear{av2007}) achieves an \(O(\log k)\) expected approximation guarantee for the classical \(k\)-means problem using a technique called \(D^2\)-sampling, which has become ubiquitous in the design of clustering algorithms. \citeauthor{behs2020}~(ESA~\citeyear{behs2020}) introduced \(\eps\)-noisy \(k\)-means++, in which the sampling probabilities may incur an adversarial multiplicative error of $(1\pm\eps)$, but were only able to recover an \(O(\log^2 k)\) approximation guarantee. Recently, \citeauthor{gor2023}~(ESA~\citeyear{gor2023}) recovered the asymptotic \(O(\log k)\) approximation, but their analysis loses a constant factor of approximately \(147{,}638\) even as \(\varepsilon\to0\), leaving open the possibility that \(k\)-means++ may be highly sensitive to even small amounts of adversarial noise. They asked whether one can instead obtain a bound within a \(1+O(\varepsilon)\) factor of the classical guarantee. We resolve this question affirmatively, proving an expected approximation guarantee of
\[
8(\ln k+2)\left(\frac{1+\varepsilon}{1-\varepsilon}\right)^4
=(1+O(\varepsilon))\,8(\ln k+2).
\]
The main technical obstacle is controlling the average cost of the \emph{uncovered} clusters in the potential-based framework of \cite{dasgupta2013}. The proof goes through the analysis of a simple to state \emph{adversarial sampling game}; where the goal is to obtain a tight bound on the average weights of elements as the game progresses. We overcome this obstacle with a short and elementary argument by considering the complete cumulative distribution function of the weights and showing that a certain high-low statistic related to the average weight is a super-martingale.  \\ 

We complement the upper bound with two separations. First, a noisy version of the \cite{av2007} lower-bound instance incurs a \(1+\Omega(\varepsilon)\) loss over exact \(k\)-means++, showing that linear dependence on the noise is necessary. Second, pointwise multiplicative control is qualitatively essential: if it is replaced
by per-round total variation closeness, then no finite approximation
guarantee is possible, even for \(k=2\).
\end{abstract}

\section{Introduction}

The \(k\)-means problem is a classical problem in computer science. Given a set of \(n\) points \(X \subset \R^d\) and an integer \(k \geq 1\), the goal is to output \(k\) centers \(C \subset \R^d\) minimizing the cost function  
\[
\phi(X,C)=\sum_{x\in X}\min_{c\in C}\|x-c\|^2.
\]
We write \(\opt_k(X)=\min_{C\subset\R^d:\,|C|=k}\phi(X,C)\) for the optimal $k$-means cost. The celebrated \(k\)-means++ algorithm of \cite{av2007} uses \(D^2\)-sampling to obtain an \(O(\log k)\) approximation in expectation. It samples the first center uniformly from \(X\). Thereafter, given the centers \(C_j\) already chosen, it samples \(x\in X\) with probability
\[
p_{C_j}(x)=\frac{\phi(x,C_j)}{\phi(X,C_j)},
\]
where \(\phi(x,C_j):=\phi(\{x\},C_j)\), and adds \(x\) to \(C_j\): $C_{j+1} = C_j \cup \{x\}$.\\

The practical influence of this idea extends beyond its theoretical applications: for example, the \texttt{KMeans} implementation in scikit-learn~\citep{pedregosa2011scikit,sklearnKMeans} uses a greedy variant of \(k\)-means++ as its default initialization. The classical analysis, however, assumes that every \(D^2\)-sampling probability is computed and sampled exactly. This assumption is stronger than what an implementation can generally guarantee. Squared distances, their sum, and the resulting normalized probabilities are evaluated using finite-precision arithmetic and are therefore subject to rounding error~\citep{goldberg1991}. Approximate distance computations, reduced-precision arithmetic, and other performance-oriented implementations may introduce further perturbations. Thus, even when the intended distribution is the \(D^2\)-distribution, the distribution used by an implementation may only approximate it.\\

Motivated by this discrepancy, \cite{behs2020} initiated the study of \(\eps\)-\emph{noisy} \(k\)-means++; see Algorithm~\ref{alg:noisy}. At each step \(j\), an adaptive adversary chooses a normalized distribution \(q_j\) satisfying
\[
(1-\eps)p_{C_j}(x)\leq q_j(x)\leq(1+\eps)p_{C_j}(x)
\qquad\text{for every }x\in X.
\]
from which the next center is then sampled from. Intriguingly, the classical analysis of \cite{av2007} does not extend directly to this scenario even for a small constant amount of noise such as \(\eps=0.01\). Using a much more involved analysis, \cite{behs2020} obtained an \(O(\log^2 k)\) guarantee. Subsequently, \cite{gor2023} recovered an asymptotic \(O(\log k)\) guarantee. The latter analysis, however, loses a constant factor of approximately
\(\frac{90}{(1-e^{-1/40})^2}\approx147{,}638\) even as \(\eps\to0\). Thus, \cite{gor2023} ask whether the approximation ratio can be brought within a \(1+O(\varepsilon)\) factor of the classical \(k\)-means++ analysis, or whether a counterexample rules this out. In this work, we answer this question in the affirmative with our main result: 

\begin{theorem}[Main Theorem] \label{thm:main}
For every nonempty finite set \(X\subset\R^d\), integer \(1\leq k\leq |X|\), and noise parameter \(0\leq\eps<1\), let \(C\) be the centers returned by Algorithm~\ref{alg:noisy}. Against every adaptive adversary satisfying the noisy sampling rule,
\[
\E[\phi(X,C)]
\leq
8(\ln k+2)
\left(\frac{1+\eps}{1-\eps}\right)^4
\opt_k(X).
\]
For \(0\leq\eps<1/2\), the additional multiplicative factor is at most \(1+160\eps\).
\end{theorem}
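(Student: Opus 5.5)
We follow the potential-based framework of \cite{dasgupta2013} for the classical analysis of \(k\)-means++, and replace each exact \(D^2\)-sampling estimate with a noisy one that costs at most a factor \(\frac{1+\eps}{1-\eps}\). Fix an optimal clustering \(A_1,\dots,A_k\) of \(X\).

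After \(j\) centers have been chosen, we call a cluster \emph{covered} if some chosen center lies in it, and \emph{uncovered} otherwise. Write \(H_j\) for the covered clusters and \(U_j\) for the uncovered ones. The first step is the per-cluster bound. For a covered cluster \(A\) whose first center was \(D^2\)-sampled, the usual argument gives \(\E[\phi(A,C)\mid \text{center in }A]\le 8\,\opt(A)\). That argument averages over points \(a\in A\) with weights \(\phi(a,C)/\phi(A,C)\). Under noise, the conditional probabilities inside \(A\) are perturbed pointwise by at most \(\frac{1+\eps}{1-\eps}\), so the bound becomes \(8\frac{1+\eps}{1-\eps}\opt(A)\). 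The first, uniform center contributes \(2\opt(A)\) exactly.

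The second step sets up the potential argument. Following Dasgupta, let \(W_j\) be the number of wasted centers, meaning those landing in already covered clusters. The potential is
\[
\Psi_j=\frac{W_j}{|U_j|}\,\phi(U_j,C_j).
\]
The final cost is at most \(\phi(H,C)+\Psi_k\), since \(W_k=|U_k|\). The key needed estimate is that the expected average cost \(\phi(U_j,C_j)/|U_j|\) of the uncovered clusters does not increase, up to a factor, as the rounds proceed. In the exact case this holds because a cluster is hit with probability proportional to its cost, so removing it lowers the average in expectation. With noise, the adversary can tilt which uncovered cluster is hit. Each hit probability is still within \(\frac{1+\eps}{1-\eps}\) of proportional, but the adversary can repeatedly bias toward low-cost clusters and so inflate the average.

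This is the main obstacle, and I would abstract it as the adversarial sampling game. There are weights \(w_1,\dots,w_m\); in each round one element is removed with probability within \((1\pm\eps)\) of \(w_i/\sum w\), and weights may only decrease. The goal is to bound the expected average of the survivors by \(\left(\frac{1+\eps}{1-\eps}\right)^{O(1)}\) times the initial average. I would consider the empirical CDF of the weights. I would define a high--low statistic: the total weight of survivors above a threshold \(\tau\), compared with \(\tau\) times the count below it. I would then show that a suitably \(\eps\)-reweighted combination of these quantities is a supermartingale for every \(\tau\). Integrating over \(\tau\) recovers the average, and optional stopping finishes the argument. The noise enters only through the ratio of the removal probabilities of high and low elements, which is where a factor of \(\frac{1+\eps}{1-\eps}\) appears.

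The third step plugs this into Dasgupta's recursion. Each wasted center increments \(W\) and costs an expected amount at most the average uncovered cost. The probability of wasting a center in round \(j\) is at most \(\frac{1+\eps}{1-\eps}\cdot\frac{\phi(H_j,C_j)}{\phi(X,C_j)}\). Summing the harmonic-type bound over rounds yields the \(\ln k+2\) term. The total loss is one factor from covered clusters, one from the waste probabilities, and up to two from the game, for \(\left(\frac{1+\eps}{1-\eps}\right)^4\) overall. Finally, for \(\eps<1/2\) we have \(\frac{1+\eps}{1-\eps}\le 1+4\eps\). Convexity on \([0,1/2]\) then gives \(\left(\frac{1+\eps}{1-\eps}\right)^4\le 1+160\eps\), since the endpoint value is \(81\le 81\).
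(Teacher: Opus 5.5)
You have the right architecture and the right accounting for the factors of $r=\frac{1+\eps}{1-\eps}$: one from covered clusters, one from the waste probability, two from the game. But there is a genuine gap: the step that carries all the difficulty is not supplied. That step is the bound $\E[\rho_k\mid\F_s]\le r^2\rho_s$ on the average uncovered cost $\rho_j=\phi(U_j,C_j)/|U_j|$. You never define the statistic (``a suitably $\eps$-reweighted combination''), never verify a supermartingale inequality, and assert the exponent $2$ rather than derive it. Moreover, the weight-based form you sketch does not support the integration step. The integral $\int_0^\infty\sum_{w_i\ge\tau}w_i\,d\tau=\sum_i w_i^2$ is a second moment, not the average. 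A normalized ratio such as $S(\tau)/(S(\tau)+\tau\,\ell(\tau))$, with $S$ the weight above $\tau$ and $\ell$ the count below, fares no better. When one element has weight $M$ and the other $n-1$ have weight $0$, it integrates to roughly $(\ln n)\,\rho$, so it cannot yield a constant factor.

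The missing idea is to compare \emph{counts} across a multiplicative buffer, which needs no $\eps$-reweighting at all. Take $\beta=r$ and a threshold $t\ge0$. Call a survivor high if $w\ge\beta^2t$ and low if $w<t$, discard the middle band, and set $Z(t)=h/(h+\ell)$. Let $T$ be the total surviving weight. The buffer gives an exact pairwise comparison: $\pi(A_{\mathrm{low}})\le\beta\,w(A_{\mathrm{low}})/T<\beta t/T\le\beta^{-1}w(A_{\mathrm{high}})/T\le\pi(A_{\mathrm{high}})$. So, conditioned on deleting a high-or-low element, a high one is chosen with probability $\theta\ge h/(h+\ell)$, and $\E[\widetilde Z]=\frac{h-\theta}{h+\ell-1}\le\frac{h}{h+\ell}$. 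Middle deletions leave $Z$ unchanged and weight decreases only move elements downward, so $Z(t)$ is a bounded supermartingale. Let $N(t)$ be the fraction of survivors of weight at least $t$. Then $N_k(\beta^2t)\le Z_k(t)$ and $Z_s(t)\le N_s(t)$, and integrating via the layer-cake identity $\rho=\int_0^\infty N(t)\,dt$, the rescaling $t\mapsto\beta^2t$ produces exactly $r^2$.

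When recombining, avoid a one-step recursion on Dasgupta's $W_j\rho_j$. Under noise $\rho_j$ can grow at each non-wasted step, and that route only gives a $k^{1-1/r}$-type loss. Instead write $\phi(U_k,C_k)=\sum_j I_j\rho_k$ and use $\E[I_j\rho_k]=\E\bigl[I_j\,\E[\rho_k\mid\F_j]\bigr]\le r^2\,\E[I_j\rho_j]$. Finally, the first center is also drawn from a noisy uniform distribution, so it gives $2r\,\opt_1(A)$, not $2\,\opt_1(A)$; this is harmless since $2r\le 8r$.
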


\paragraph{Remark on the leading constant.} We state and prove Theorem 1.1 using the classical covered-cluster bound of \cite{av2007}, as presented in the potential analysis of \cite{dasgupta2013}. Later, the leading constant in the approximation guarantee was sharpened from $8$ to $5$ by \cite{makarychev2020}. Their refinement is orthogonal to our new argument: conditioned on sampling from a fixed uncovered optimal cluster, the noisy conditional distribution is pointwise within a factor \((1+\varepsilon)/(1-\varepsilon)\) of the corresponding exact \(D^2\)-distribution. Consequently, substituting their covered-cluster lemma into our analysis gives

\[
\E[\phi(X,C)]
\le
5(\ln k+2)
\left(\frac{1+\varepsilon}{1-\varepsilon}\right)^4
\operatorname{OPT}_k(X).
\]

We retain the constant \(8\) in the remainder of the paper to keep the exposition aligned with the preceding analyses and to isolate the new ingredient controlling the uncovered clusters.

\subsection{Discussion on our techniques and related work }

The analysis of noisy $k$-means++ in prior work usually goes through a related \emph{adversarial sampling game} the details of which we discuss in Section~\ref{sec:sampling-game}. Indeed, our improved bounds for noisy $k$-means++ are also a consequence of a more refined analysis of this sampling game in Section~\ref{sec:improved-sampling-game}. A high level description of the game is as follows: initially, there is a set of elements, with each being assigned a non-negative weight value. The game then proceeds in successive rounds. In each round, an adversary can either sample an element according to a distribution which is \emph{close} to being proportional to its weight and delete it from the set of elements, followed by decreasing the weights of all surviving elements. Our goal is to study how the expected weight of the surviving elements evolves over the rounds. Ideally, we want the average to grow in a bounded manner as compared to the average at the start of the game. A tighter upper bound on the expected average through the rounds with respect to the initial average corresponds to a better approximation guarantee for Algorithm~\ref{alg:noisy}. \\ 

\cite{behs2020}  bound the average via a two-stage probabilistic argument - a Chernoff bound to show that most deletions behave like proportional sampling, followed by a union bound over all surviving elements to control the maximum surviving value, which yields a bound with a residual $O(\log k )$ factor. \cite{gor2023} instead partition elements into fixed big/medium/small weight classes and analyze the transitions between these classes directly, which removes the 
$\log k$ dependence but at the cost of an enormous, non-vanishing constant. Our key insight is to study the \emph{cumulative distribution function} of the weights as a whole instead of directly attacking the average itself. Instead of a coarse and fixed  partitioning of the weights, we consider a high-mid-low partition at every threshold $t \geq 0$ simultaneously (Definition~\ref{def:decomposition}).  We then  show that a high–low ratio statistic related to the CDF is a bounded supermartingale (Lemma~\ref{lem:high-low-super-martingale}), a fact that follows from one direct pairwise probability comparison (Lemma~\ref{lem:high-low-prob}) rather than concentration inequalities or case analysis. Integrating this one inequality over all thresholds via the layer-cake identity (Lemma~\ref{lem:tail}) then recovers the average cost directly. Because the argument is a martingale inequality rather than a probabilistic tail bound, it holds exactly for every $k$ with no need for $k$ to be large, and because it uses a continuum of thresholds rather than a fixed partition, it avoids both the log-factor loss of \cite{behs2020} and the non-vanishing constant of \cite{gor2023}, yielding the clean 
dependence that converges to the noiseless bound as $\eps \to 0$ in Theorem~\ref{thm:improved-sampling}. \\

Theorems~\ref{thm:linear-noise-lower-bound} and~\ref{thm:tv-no-guarantee} clarify the role of the noise model. The first adapts the lower-bound instance of \cite{av2007} and proves that a \(1+\Omega(\eps)\) loss relative to exact \(k\)-means++ is unavoidable. The second shows that pointwise multiplicative control is qualitatively essential: under per-round total variation closeness, no finite approximation guarantee is possible even for \(k=2\). It is worth noting that there are several algorithms which speed up $k$-means++ in practice using approximate sampling techniques such as MCMC sampling \citep{bachem2016fast,bachem2016approximate} or rejection sampling \citep{cohenaddad2020fast,shah2026fast} which usually have an additive guarantee related to the variance of the dataset stemming from using weaker notion of closeness of distributions (such as TV distance) instead of the stronger pointwise error. This is usually not a problem in practice when $\frac{\opt_1(X)}{\opt_k(X)}$ is bounded, and hence Theorem~\ref{thm:tv-no-guarantee} may be seen as exploiting this weakness. 


\begin{algorithm}[t]
\caption{$\eps$-\emph{noisy} $k$-means++} \label{alg:noisy}
\begin{algorithmic}[1]
\Require \(X \subset \R^d\), \(k \geq 2\), and \(0\leq\varepsilon<1\)
\State Choose a normalized distribution \(q_1\) on \(X\) satisfying \(q_1(x)\in\left[(1-\varepsilon)/n,(1+\varepsilon)/n\right]\) for every \(x\in X\); sample \(x\sim q_1\) and set \(C_1=\{x\}\).
\For{$j \leftarrow 1, \ldots, k-1$}
    \State Choose a normalized distribution \(q_{j+1}\) on \(X\) satisfying
    \[
    q_{j+1}(x)\in
    \left[
    (1-\varepsilon)\frac{\varphi(x,C_j)}{\varphi(X,C_j)},
    (1+\varepsilon)\frac{\varphi(x,C_j)}{\varphi(X,C_j)}
    \right] \quad \forall x \in X
    \]
    \State Sample \(x\sim q_{j+1}\) and set \(C_{j+1}=C_j\cup\{x\}\).
\EndFor
\State \Return $C := C_k$
\end{algorithmic}
\end{algorithm}

\section{The Sampling Game} \label{sec:sampling-game}

The analysis of \cite{behs2020} passes through an abstract adversarial deletion experiment. Motivated by this abstraction, we isolate a closely related \(\beta\)-sampling game that captures precisely the property of noisy \(D^2\)-sampling needed here. We first analyse this game without using any \(k\)-means terminology and then apply the resulting theorem in Section~\ref{sec:potential-analysis}. 

\begin{definition}[$\beta$-sampling game] \label{def:sampling-game}
Let \(\beta\geq1\). Initially, there is a set \(\S_0\) of \(m\) elements. Each \(A\in\S_0\) has a nonnegative weight \(w_0(A)\). Write
\[
w_0(\S_0):=\sum_{A\in\S_0}w_0(A)
\qquad\text{and}\qquad
\rho_0:=\frac{w_0(\S_0)}{|\S_0|}.
\]
The game proceeds in rounds during which elements may be deleted and their weights may decrease.

In round \(j\), let \(\S_j\) be the set of surviving elements and let \(w_j:\S_j\to\R_{\geq0}\) be their weight function. Their average weight is
\[
\rho_j:=\frac{w_j(\S_j)}{|\S_j|}.
\]

After observing the state through round $j$, the adversary chooses an $\Ev_j$-measurable deletion probability $d_{j+1}\in[0,1]$. A fresh Bernoulli random variable $D_{j+1}$ is then drawn with $\Pr(D_{j+1}=1\mid  \Ev_j)=d_{j+1}$.

\begin{enumerate}
    \item If \(D_{j+1}=0\), no element is deleted: \(\S_{j+1}=\S_j\). The adversary chooses a new weight function satisfying
    \[
    0\leq w_{j+1}(A)\leq w_j(A)
    \qquad\text{for every }A\in\S_j.
    \]
    \item If \(D_{j+1}=1\), the adversary chooses a normalized probability distribution \(\pi_j\) on \(\S_j\) satisfying
    \[
    {\beta^{-1}} \cdot \frac{w_j(A)}{w_j(\S_j)}
    \leq \pi_j(A)\leq
    \beta \cdot \frac{w_j(A)}{w_j(\S_j)}
    \qquad\text{for every }A\in\S_j.
    \]
    It samples \(A^*\sim\pi_j\), deletes \(A^*\), and sets \(\S_{j+1}=\S_j\setminus\{A^*\}\). Finally, it chooses a new weight function satisfying
    \[
    0\leq w_{j+1}(A)\leq w_j(A)
    \qquad\text{for every }A\in\S_{j+1}.
    \]
\end{enumerate}

Let \(\Ev_j\) denote the \(\sigma\)-algebra generated by the complete history through round \(j\), including the adversary's past decisions and randomness. The choice of \(d_{j+1}\), and of \(\pi_j\) when \(D_{j+1}=1\), is made from this revealed history and is therefore \(\Ev_j\)-measurable; any internal randomness used to make these choices is included before \(A^*\) is sampled. In particular, \(\S_j\), \(w_j\), and \(\rho_j\) are \(\Ev_j\)-measurable. If \(\S_j=\varnothing\) or \(w_j(\S_j)=0\), the game stops and we set \(\rho_\ell=0\) for every \(\ell\geq j\).
\end{definition}

We first analyse the noiseless case \(\beta=1\) as done by \cite{dasgupta2013}. This calculation identifies the precise obstruction that appears when \(\beta>1\).
\subsection{Analyzing \(\beta=1\)}
We recall the definition properties of a supermartingale; see, for example, \cite{williams1991martingales}.

\begin{definition}[Supermartingale]
Let \((\Omega,\EuScript{F},\mathbf{P})\) be a probability space with a filtration \((\EuScript{E}_j)_{j\geq0}\), that is, an increasing sequence of \(\sigma\)-algebras \(\EuScript{E}_0\subseteq\EuScript{E}_1\subseteq\cdots\subseteq\EuScript{F}\). A sequence of random variables \((\rho_j)_{j\geq0}\) is a \emph{supermartingale} with respect to this filtration if, for every \(j\geq0\),
\begin{enumerate}
\item[(i)] \(\rho_j\) is \(\EuScript{E}_j\)-measurable;
\item[(ii)] \(\mathbf{E}[|\rho_j|]<\infty\);
\item[(iii)] \(\mathbf{E}[\rho_{j+1}\mid\EuScript{E}_j]\leq\rho_j\) almost surely.
\end{enumerate}
The sequence is \emph{bounded} if there exists \(M<\infty\) such that \(|\rho_j|\leq M\) almost surely for all \(j\).
\end{definition}

\begin{lemma}[Tower property of bounded supermartingales]
\label{lem:tower}
Let \((\rho_j)_{j\geq0}\) be a bounded supermartingale with respect to \((\EuScript{E}_j)_{j\geq0}\). Then, for every \(0\leq s<k\),
\[
\mathbf{E}[\rho_k \mid \EuScript{E}_s] \leq \rho_s.
\]
\end{lemma}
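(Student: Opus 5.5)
The plan is to argue by induction on the gap \(k-s\), using the tower property of conditional expectation together with monotonicity of conditional expectation. The base case \(k=s+1\) is exactly property (iii) of the definition of a supermartingale: \(\E[\rho_{s+1}\mid\Ev_s]\le\rho_s\) almost surely.

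For the inductive step, suppose \(\E[\rho_{k}\mid\Ev_s]\le\rho_s\) almost surely for some \(k>s\); we show the same bound for \(k+1\). Since \(\Ev_s\subseteq\Ev_k\), the tower property of conditional expectation gives
\[
\E[\rho_{k+1}\mid\Ev_s]=\E\bigl[\E[\rho_{k+1}\mid\Ev_k]\bigm|\Ev_s\bigr].
\]
By (iii), \(\E[\rho_{k+1}\mid\Ev_k]\le\rho_k\) almost surely. Monotonicity of conditional expectation then gives \(\E\bigl[\E[\rho_{k+1}\mid\Ev_k]\mid\Ev_s\bigr]\le\E[\rho_k\mid\Ev_s]\) almost surely, and the induction hypothesis bounds this by \(\rho_s\).

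The only technical point is integrability, which every conditional expectation above needs in order to be well defined and for the tower and monotonicity properties to apply. This is where boundedness is used. Since \(|\rho_j|\le M\) almost surely, every \(\rho_j\) and every conditional expectation \(\E[\rho_{j}\mid\Ev_i]\) is bounded by \(M\) and hence integrable; in fact (ii) alone would suffice. Because the horizon is finite, this step is routine. The main care needed is to state that all inequalities hold almost surely and that a finite union of null sets is null, so the inequalities chain without trouble.
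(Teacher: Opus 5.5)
Your proof is correct. The paper states this lemma without proof and treats it as the standard textbook fact; your induction on \(k-s\), using the tower property and monotonicity of conditional expectation, is exactly the standard argument behind it, and as you note, the integrability condition (ii) alone already suffices in place of boundedness.
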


For the case of $\beta = 1$, it is easy to see that the expected average weight cannot increase from one round to the next. 

\begin{lemma} \label{lem:beta-one}
In the \(\beta\)-sampling game with \(\beta=1\), \((\rho_j)_{j\geq0}\) is a bounded supermartingale with respect to \((\Ev_j)_{j\geq0}\). Consequently, for every \(0\leq s<k\),
\[
\E[\rho_k\mid\Ev_s]\leq\rho_s.
\]
\end{lemma}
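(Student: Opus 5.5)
We verify the three defining properties of a supermartingale and then invoke Lemma~\ref{lem:tower}. Measurability of \(\rho_j\) with respect to \(\Ev_j\) is built into Definition~\ref{def:sampling-game}. For boundedness, note that weights never increase and \(|\S_j|\geq1\) while the game is running, so \(0\leq\rho_j\leq w_0(\S_0)\leq m\cdot\max_A w_0(A)=:M\). Once the game stops we have \(\rho_j=0\), which is also within this bound. Integrability follows immediately.

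The substance is the one-step inequality \(\E[\rho_{j+1}\mid\Ev_j]\leq\rho_j\). If the game has already stopped, both sides are \(0\). Otherwise, condition on \(\Ev_j\) and split on the value of \(D_{j+1}\).

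\emph{Case \(D_{j+1}=0\).} Here \(\S_{j+1}=\S_j\) and the weights only decrease, so \(\rho_{j+1}\leq\rho_j\) pointwise.

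\emph{Case \(D_{j+1}=1\).} With \(\beta=1\), the sandwich condition forces \(\pi_j(A)=w_j(A)/W\), where \(W:=w_j(\S_j)\) and \(n:=|\S_j|\). Since weights only decrease after deletion, \(\rho_{j+1}\leq (W-w_j(A^*))/(n-1)\) whenever \(n\geq2\). Taking expectations over \(A^*\),
\[
\E\Bigl[\tfrac{W-w_j(A^*)}{n-1}\Bigm|\Ev_j,D_{j+1}=1\Bigr]=\frac{1}{n-1}\Bigl(W-\sum_{A}\frac{w_j(A)^2}{W}\Bigr)\leq\frac{1}{n-1}\Bigl(W-\frac{W}{n}\Bigr)=\frac{W}{n}=\rho_j .
\]
The inequality is Cauchy--Schwarz, \(\sum_A w_j(A)^2\geq W^2/n\). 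If \(n=1\), the set becomes empty and \(\rho_{j+1}=0\leq\rho_j\). Likewise, if the new total weight is \(0\), then \(\rho_{j+1}=0\), which only helps.

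Combining the two cases, weighted by \(d_{j+1}\) and \(1-d_{j+1}\), gives the supermartingale inequality. Lemma~\ref{lem:tower} then yields \(\E[\rho_k\mid\Ev_s]\leq\rho_s\). Alternatively, one can iterate the one-step inequality directly using the ordinary tower property of conditional expectation.

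The only real step is the deletion case. It rests on the fact that proportional sampling removes an element whose expected weight is at least the current average. This is exactly what breaks down when \(\beta>1\), since the adversary can then bias the deletion toward light elements.
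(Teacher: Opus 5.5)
Your proposal is correct and follows essentially the same proof as the paper: you condition on \(D_{j+1}\), bound \(\rho_{j+1}\) by the average immediately after deletion, and use Cauchy--Schwarz to show that the expected deleted weight \(\sum_A w_j(A)^2/w_j(\S_j)\) is at least \(\rho_j\), which makes the post-deletion average non-increasing in expectation. Your explicit verification of boundedness via \(0\leq\rho_j\leq w_0(\S_0)\) fills in a detail that the paper asserts but does not check.
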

\begin{proof}
    Condition on \(D_{j+1}\). If \(D_{j+1}=0\), then \(\S_{j+1}=\S_j\) and the weights can only decrease, so
    \[
    \E[\rho_{j+1}\mid\Ev_j,D_{j+1}=0]\leq\rho_j.
    \]
    Now suppose \(D_{j+1}=1\). Exactly one element \(A^*\in\S_j\) is deleted. If \(|\S_j|=0\), both \(\rho_j\) and \(\rho_{j+1}\) are \(0\) by definition. If \(|\S_j|=1\), then \(\rho_{j+1}=0\leq\rho_j\). We may therefore assume that \(|\S_j|>1\).

    Let \(\widetilde{\rho}_{j+1}\) be the average immediately after the deletion and before the remaining weights decrease. Then \(\rho_{j+1}\leq\widetilde{\rho}_{j+1}\), and
    \begin{align*}
        &\E[\rho_{j+1}| \Ev_j , D_{j+1} = 1] \leq \E[\widetilde{\rho}_{j+1} | \Ev_j,D_{j+1} = 1] \\ =& \E \left[ \frac{\sum_{A \in \S_{j+1}}w_j(A)}{|\S_{j+1}|} \Big| \Ev_j,D_{j+1} = 1 \right] \\ 
        =& \E \left[  \frac{\sum_{A \in \S_j}w_j(A) - w_j(A^*)}{|\S_j|-1}\Big|\Ev_j, D_{j+1}=1\right] \\ =& \rho_j \left(1 + \frac{1}{|\S_j|-1} \right) - \frac{1}{|\S_j|-1} \E[w_j(A^*) | \Ev_j,D_{j+1}=1].
    \end{align*}

    Let us compute the remaining expectation:
    \[
    \E[w_j(A^*)\mid\Ev_j,D_{j+1}=1]
    =\sum_{A\in\S_j}w_j(A)\pi_j(A)
    =\frac{\sum_{A\in\S_j}w_j(A)^2}{w_j(\S_j)}.
    \]
    By Cauchy--Schwarz,\footnote{Apply
    \(
    (\sum_{A\in\S_j}w_j(A))^2
    \leq
    |\S_j|\sum_{A\in\S_j}w_j(A)^2.
    \)}
    \[
    \sum_{A\in\S_j}w_j(A)^2
    \geq\frac{w_j(\S_j)^2}{|\S_j|}.
    \]
    Therefore,
    \[
    \E[w_j(A^*)\mid\Ev_j,D_{j+1}=1]
    \geq\frac{w_j(\S_j)}{|\S_j|}
    =\rho_j,
    \]
    and hence
    \[
    \E[\rho_{j+1}\mid\Ev_j,D_{j+1}=1]
    \leq
    \rho_j\left(1+\frac{1}{|\S_j|-1}\right)
    -\frac{\rho_j}{|\S_j|-1}
    =\rho_j.
    \]

    \begin{align*}
        \E[\rho_{j+1}|\Ev_j] &= \E[\rho_{j+1}|\Ev_j , D_{j+1} = 0] \cdot \Pr[D_{j+1}=0|\Ev_j] + \E[\rho_{j+1}|\Ev_j , D_{j+1} = 1] \cdot \Pr[D_{j+1}=1|\Ev_j]\\ 
        &\leq \rho_j \left( \Pr[D_{j+1}=0|\Ev_j]+\Pr[D_{j+1}=1|\Ev_j]\right) = \rho_j.
    \end{align*}
\end{proof}

\subsection{What goes wrong when $\beta>1$?}

Let us try to repeat the calculation in Lemma~\ref{lem:beta-one} for \(\beta>1\). Fix a round \(j\), condition on \(\Ev_j\), and write \(n_j:=|\S_j|\). Let \(A^\star\) denote the element removed when \(D_{j+1}=1\). Whenever \(\Pr(D_{j+1}=1\mid\Ev_j)>0\), we have
\begin{align*}
\E\!\left[w_j(A^\star)\mid\Ev_j,D_{j+1}=1\right]
=
\sum_{A\in\S_j}\pi_j(A)w_j(A)
\geq
\frac{1}{\beta\,w_j(\S_j)}
\sum_{A\in\S_j}w_j(A)^2
\geq
\frac{1}{\beta\,w_j(\S_j)}
\cdot
\frac{w_j(\S_j)^2}{n_j}
=
\frac{\rho_j}{\beta}.
\end{align*}
The second inequality follows from Cauchy--Schwarz. Suppose first that \(n_j\geq2\). Since the weights of the surviving elements can only decrease, on the event \(D_{j+1}=1\) we have
\[
\rho_{j+1}
\leq
\frac{w_j(\S_j)-w_j(A^\star)}{n_j-1}.
\]
Taking conditional expectations and using \(w_j(\S_j)=n_j\rho_j\), we obtain
\begin{align*}
\E[\rho_{j+1}\mid\Ev_j,D_{j+1}=1]
\leq
\frac{w_j(\S_j)-
\E[w_j(A^\star)\mid\Ev_j,D_{j+1}=1]}
{n_j-1}
\leq
\frac{n_j\rho_j-\rho_j/\beta}{n_j-1}
=
\left(1+\frac{1-\beta^{-1}}{n_j-1}\right)\rho_j.
\end{align*}
If \(D_{j+1}=0\), then \(\S_j\) remains unchanged and its weights can only decrease. Hence, \(\rho_{j+1}\leq\rho_j\). The case \(n_j=1\) is immediate: if the unique element is removed, the new average is \(0\), and otherwise the average cannot increase. This calculation shows exactly where the proof for \(\beta=1\) breaks. When \(\beta=1\), the multiplier above is \(1\), and \((\rho_j)_{j\geq0}\) is a supermartingale. When \(\beta>1\), however,
\[
1+\frac{1-\beta^{-1}}{n_j-1}>1,
\]
so the same argument no longer shows that the average weight is non-increasing in expectation.
Let \(\alpha:=1-\beta^{-1}\)
and define
\[
G(0)=G(1):=1,
\qquad
G(n):=\prod_{m=1}^{n-1}\left(1+\frac{\alpha}{m}\right)
\quad\text{for }n\geq2.
\]

We claim that \((G(n_j)\rho_j)_{j\geq0}\) is a supermartingale. To see this, fix \(j\), condition on \(\Ev_j\), and let
\[
p_j:=\Pr(D_{j+1}=1\mid\Ev_j).
\]
When \(n_j\geq2\), the preceding bounds give
\begin{align*}
&\E[G(n_{j+1})\rho_{j+1}\mid\Ev_j]\\
&=
p_jG(n_j-1)
\E[\rho_{j+1}\mid\Ev_j,D_{j+1}=1]
+
(1-p_j)G(n_j)
\E[\rho_{j+1}\mid\Ev_j,D_{j+1}=0]\\
&\leq
p_jG(n_j-1)
\left(1+\frac{\alpha}{n_j-1}\right)\rho_j
+
(1-p_j)G(n_j)\rho_j\\
&=
p_jG(n_j)\rho_j
+
(1-p_j)G(n_j)\rho_j =
G(n_j)\rho_j.
\end{align*}
The same conclusion is immediate when \(n_j\leq1\). Thus \(G(n_j)\rho_j\) is indeed a supermartingale.

Applying the tower property from round \(s\) to round \(k\), we obtain
\[
\E[G(n_k)\rho_k\mid\Ev_s]
\leq
G(n_s)\rho_s.
\]
Since \(G(n_k)\geq1\), it follows that
\[
\E[\rho_k\mid\Ev_s]
\leq
G(n_s)\rho_s.
\]
It remains to bound \(G(n_s)\). Using \(1+x\leq e^x\), we have
\begin{align*}
G(n_s)
=
\prod_{m=1}^{n_s-1}\left(1+\frac{\alpha}{m}\right) \leq
\exp\left(\alpha\sum_{m=1}^{n_s-1}\frac1m\right)\leq
\exp\left(\alpha(1+\ln n_s)\right)=
e^{\alpha} n_s^\alpha.
\end{align*}
Since \(n_s\leq k\) and \(\alpha=1-\beta^{-1}\), this proves
\[
\E[\rho_k\mid\Ev_s]
\leq
e^{1-\beta^{-1}}
k^{\,1-\beta^{-1}}\rho_s.
\]

Thus, a more careful version of the direct argument gives a polynomial dependence on \(k\). Nevertheless, this bound still grows with \(k\) for every fixed \(\beta>1\). \cite{behs2020} obtained the stronger bound
\[
\E[\rho_k\mid\Ev_s]\leq O(\log k)\rho_s
\]
for sufficiently large \(k\), while \cite{gor2023} obtained a constant bound of approximately
\[
\E[\rho_k\mid\Ev_s]\lesssim147{,}638\,\rho_s.
\]
Neither bound recovers the noiseless guarantee as \(\beta\to1\). Our main technical contribution, stated in the next section, removes the dependence on \(k\) while retaining the correct limiting behavior:
\[
\E[\rho_k\mid\Ev_s]\leq\beta^2\rho_s.
\] 

\section{Improved Analysis of the Sampling Game} \label{sec:improved-sampling-game}

We now present our improved analysis of the \(\beta\)-sampling game.

\begin{theorem}\label{thm:improved-sampling}

Consider the \(\beta\)-sampling game of Section~\ref{sec:sampling-game}, and let \(\rho_j\) be the average weight in round \(j\). For every adversary and every \(0\leq s<k\),
\[
\E[\rho_k\mid\Ev_s]\leq\beta^2\rho_s.
\]

\end{theorem}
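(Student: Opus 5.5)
The plan is to control the \emph{distribution} of surviving weights at every threshold simultaneously, rather than the average $\rho_j$ directly. For each threshold $t>0$ and round $j$, I will split $\S_j$ into a \emph{high} part $H_j(t)=\{A\in\S_j: w_j(A)\ge t\}$, a \emph{low} part $L_j(t)=\{A\in\S_j: w_j(A)< t/\beta^2\}$, and a \emph{mid} part containing the rest. Write $h_j(t)=|H_j(t)|$ and $l_j(t)=|L_j(t)|$. I will then study the high--low statistic
\[
R_j(t):=\frac{h_j(t)}{h_j(t)+l_j(t)},
\]
setting $R_j(t):=0$ when $h_j(t)=0$. The key claim is that for each fixed $t$, $(R_j(t))_{j\ge0}$ is a supermartingale with values in $[0,1]$. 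Lemma~\ref{lem:tower} then gives $\E[R_k(t)\mid\Ev_s]\le R_s(t)$.

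To prove the supermartingale property, I will use one pairwise comparison. If $A\in H_j(t)$ and $B\in L_j(t)$, the sampling constraint gives
\[
\pi_j(A)\ \ge\ \frac{w_j(A)}{\beta\, w_j(\S_j)}\ \ge\ \frac{t}{\beta\, w_j(\S_j)}\ >\ \frac{\beta\, w_j(B)}{w_j(\S_j)}\ \ge\ \pi_j(B).
\]
So every high element is at least as likely to be deleted as every low element. Hence, conditioned on the deleted element lying in $H_j\cup L_j$, it is high with probability at least $h/(h+l)$. This is the urn comparison: removing a high element sends the ratio to $(h-1)/(h+l-1)$, and removing a low one sends it to $h/(h+l-1)$. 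The exact average over a uniformly random removal from $H_j\cup L_j$ equals $h/(h+l)$. Tilting the removal toward high elements can only lower the expectation, since $(h-1)/(h+l-1)\le h/(h+l-1)$. Deleting a mid element leaves $R$ unchanged. The subsequent weight decreases only move elements downward: from high to mid or low, or from mid to low. Each such move decreases $h$ or increases $l$, so it cannot increase $R$. The boundary cases where $h+l$ becomes small, or the game stops, need only a routine check.

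Next I will pass back to averages using the layer-cake identity (Lemma~\ref{lem:tail}):
\[
\rho_k=\int_0^\infty \frac{h_k(t)}{n_k}\,dt.
\]
Since $n_k\ge h_k(t)+l_k(t)$, we have $h_k(t)/n_k\le R_k(t)$. By Fubini and the supermartingale property,
\[
\E[\rho_k\mid\Ev_s]\le \int_0^\infty R_s(t)\,dt.
\]

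The remaining, and I expect hardest, step is the deterministic inequality
\[
\int_0^\infty \frac{h_s(t)}{h_s(t)+l_s(t)}\,dt\ \le\ \beta^2\rho_s
\]
for an arbitrary finite weight vector. The difficulty is that $R_s(t)$ can equal $1$ on the whole range where $l_s(t)=0$. A sanity check on equal weights, and on one large weight plus many equal small ones, shows that the loss there is exactly the factor $\beta^2$. For the general case, I will first try writing $h+l=n-m(t)$, where $m(t)=\#\{A: t/\beta^2\le w(A)<t\}$ counts the mid elements. I will then split the $t$-axis according to whether $l(t)\ge h(t)$. Where $l(t)\ge h(t)$, I use $R\le h/n\cdot n/(h+l)$. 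Where $l(t)<h(t)$, I compare with $\#\{w\ge t/\beta^2\}/n$ after the substitution $t\mapsto\beta^2 t$, which costs exactly a factor $\beta^2$. If this split is too lossy, my fallback is to change the statistic, for instance by normalizing by $n_j$ with the mid elements counted fractionally. The pairwise comparison above is robust to such a change, so only this final integration step would need to be redone.
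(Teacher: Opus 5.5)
Your supermartingale step is exactly the paper's Lemma~\ref{lem:high-low-super-martingale} after the reparametrization $t\mapsto t/\beta^2$; the paper calls an element high if $w\ge\beta^2 t$ and low if $w<t$. Your reduction $\E[\rho_k\mid\Ev_s]\le\int_0^\infty R_s(t)\,dt$ is also correct.

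The gap is the final deterministic step, which you leave open, and the split you sketch does not close it. On the region $l(t)\ge h(t)$, writing $R=\frac{h}{n}\cdot\frac{n}{h+l}$ is an identity, not a bound. The factor $n/(h+l)$ is unbounded when mid elements dominate. For instance, $h=l=1$ together with $n-2$ weights in $[t/\beta^2,t)$ gives $R_s(t)=1/2$, while $h/n=1/n$.

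The missing observation is that the comparison you reserve for the region $l<h$ holds for every $t$. Let $m=m(t)$ be the number of mid elements. Then $\frac{h}{h+l}\le\frac{h+m}{h+m+l}$ is equivalent to $ml\ge 0$, so
$R_s(t)\le|\{A\in\S_s: w_s(A)\ge t/\beta^2\}|/|\S_s|=N_s(t/\beta^2)$.
Lemma~\ref{lem:tail} with the substitution $u=t/\beta^2$ then gives $\int_0^\infty N_s(t/\beta^2)\,dt=\beta^2\rho_s$.

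This is precisely the paper's one-line inequality $Z_s(t)\le N_s(t)$. The only difference is that the paper performs the change of variables on the round-$k$ side, via $N_k(\beta^2 t)\le Z_k(t)$, rather than on the round-$s$ side. So no case split or modified statistic is needed. With this inequality your argument is complete and coincides with the paper's.
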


\paragraph{Proof idea.}
For \(\beta>1\), the sequence \((\rho_j)\) need not be a supermartingale. Instead, for every continuous threshold \(t\geq0\), we partition the elements into \emph{low}, \emph{middle}, and \emph{high} classes. We show that the fraction \(Z_j(t)\) of high elements among the low and high elements is a bounded supermartingale. We then compare this statistic with the tails of the empirical weight distribution and integrate the comparison to recover the average weight.

\subsection{Dynamic threshold classes} 

Our first step is to show that the fraction of high-weight elements among the high-weight and low-weight elements cannot increase in expectation.

\begin{definition} \label{def:decomposition}

Consider the elements \(\S_j\) in any round \(j\geq0\) of the \(\beta\)-sampling game, with weights \(w_j:\S_j\to\R_{\geq0}\). For every threshold \(t\geq0\), define
\begin{align*}
    \A^\high_j(t) &:= \{A\in\S_j:w_j(A)\geq\beta^2t\},\\
    \A^\midd_j(t) &:= \{A\in\S_j:t\leq w_j(A)<\beta^2t\},\\
    \A^\low_j(t) &:= \{A\in\S_j:0\leq w_j(A)<t\}.
\end{align*}
Let
\[
h_j(t):=|\A^\high_j(t)|,\qquad
m_j(t):=|\A^\midd_j(t)|,\qquad
\ell_j(t):=|\A^\low_j(t)|.
\]
Define the high-low statistic
\[
Z_j(t):=
\begin{cases}
\dfrac{h_j(t)}{h_j(t)+\ell_j(t)},&h_j(t)+\ell_j(t)>0,\\[6pt]
0,&h_j(t)+\ell_j(t)=0.
\end{cases}
\]

\end{definition}

\paragraph{Why omit the middle class?}
The sampling rule permits a reliable pairwise comparison only when two weights differ by a factor of at least \(\beta^2\). Thus, \(Z_j(t)\) compares elements above \(\beta^2t\) with elements below \(t\), leaving the ambiguous interval \([t,\beta^2t)\) aside. The gap \(\beta^2\) ensures that every high element is at least as likely to be deleted as every low element.

\begin{lemma}\label{lem:high-low-prob}
Fix a completed round \(j\) and suppose \(D_{j+1}=1\), so the adversary chooses \(\pi_j\) and samples an element \(A^*\sim\pi_j\) to delete. For every \(t\geq0\), \(A_1\in\A^\low_j(t)\), and \(A_2\in\A^\high_j(t)\),
\[
\pi_j(A_1)\leq\pi_j(A_2).
\]
Moreover, define
\[
\theta_j(t):=
\Pr\!\left(
A^*\in\A_j^\high(t)
\;\middle|\;
\Ev_j,D_{j+1}=1,A^*\notin\A_j^\midd(t)
\right).
\]
Then
\[
\theta_j(t)\geq\frac{h_j(t)}{h_j(t)+\ell_j(t)}.
\]
When the event conditioned on has probability $0$, set \(\theta_j(t):=0\).
\end{lemma}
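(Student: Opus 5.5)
The proof splits into a pointwise comparison followed by an averaging step.

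\emph{Pointwise comparison.} Fix \(t\geq0\), \(A_1\in\A^\low_j(t)\) and \(A_2\in\A^\high_j(t)\). The sampling rule of Definition~\ref{def:sampling-game} gives, with \(W:=w_j(\S_j)>0\),
\[
\pi_j(A_1)\leq\beta\,\frac{w_j(A_1)}{W}\leq\beta\,\frac{t}{W}=\beta^{-1}\,\frac{\beta^2t}{W}\leq\beta^{-1}\,\frac{w_j(A_2)}{W}\leq\pi_j(A_2).
\]
The gap \(\beta^2\) between the low and high classes is exactly what absorbs the two factors of \(\beta\). If \(W=0\), the game has stopped, so there is nothing to prove. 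The case \(t=0\) is also trivial, because then \(\A^\low_j(0)=\varnothing\).

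\emph{Averaging.} Set \(H:=\A^\high_j(t)\), \(L:=\A^\low_j(t)\), \(h:=|H|\) and \(\ell:=|L|\). If \(\pi_j(H\cup L)=0\), the conditioning event has probability \(0\), and by convention \(\theta_j(t)=0\). I need to check that the claimed bound still holds in that case. Since every element of \(H\) has \(\pi_j\)-mass at least \(\beta^{-1}\beta^2t/W>0\) when \(t>0\), \(\pi_j(H)=0\) forces \(h=0\), so the right-hand side is also \(0\). For \(t=0\), \(L\) is empty and \(H=\S_j\), which has positive mass, so this degenerate case cannot occur.

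Otherwise,
\[
\theta_j(t)=\frac{\pi_j(H)}{\pi_j(H)+\pi_j(L)},
\]
where I use that, given \(\Ev_j\) and \(D_{j+1}=1\), the deleted element \(A^*\) has law \(\pi_j\). Let \(a:=\min_{A\in H}\pi_j(A)\) and \(b:=\max_{A\in L}\pi_j(A)\); the pointwise comparison gives \(b\leq a\). Then \(\pi_j(H)\geq ha\) and \(\pi_j(L)\leq\ell b\leq\ell a\). The map \(x\mapsto x/(x+y)\) is increasing in \(x\) and decreasing in \(y\), so
\[
\theta_j(t)\geq\frac{ha}{ha+\ell a}=\frac{h}{h+\ell}.
\]
Equivalently, one can cross-multiply and reduce the claim to \(\ell\,\pi_j(H)\geq h\,\pi_j(L)\). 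This follows by summing \(\pi_j(A_2)\geq\pi_j(A_1)\) over all pairs in \(H\times L\). I would write it this way, since it avoids dividing by \(a\), which might be \(0\) when \(h=0\).

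\emph{Main obstacle.} There is no real difficulty. The only care needed is with the degenerate conventions (\(h+\ell=0\), zero-probability conditioning, \(t=0\)) and with justifying that the conditional law of \(A^*\) is \(\pi_j\) restricted to \(H\cup L\) and renormalized.
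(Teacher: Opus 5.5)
Your proposal is correct and follows the paper's proof essentially step for step: the same chain of sampling-rule inequalities in which the \(\beta^2\) gap absorbs the two factors of \(\beta\), followed by averaging the pairwise comparison. Your cross-multiplied form \(\ell\,\pi_j(H)\geq h\,\pi_j(L)\) is exactly the paper's comparison of the mean \(\pi_j\)-mass on the low and high classes. Your handling of the degenerate cases (\(t=0\), zero-probability conditioning) is more explicit than the paper's brief ``immediate'' remark, but it is not a different route.
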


\begin{proof}
For \(A_1\in\A_j^\low(t)\) and \(A_2\in\A_j^\high(t)\),
\[
\pi_j(A_2)
\geq\frac{1}{\beta}\frac{w_j(A_2)}{w_j(\S_j)}
\geq\beta\frac{t}{w_j(\S_j)}
\geq\beta\frac{w_j(A_1)}{w_j(\S_j)}
\geq\pi_j(A_1).
\]

The desired bound is immediate if \(h_j(t)=0\) or \(\ell_j(t)=0\), so assume both are positive. Averaging the pairwise comparison gives
\[
\frac{\sum_{A\in\A^\low_j(t)}\pi_j(A)}{\ell_j(t)}
\leq
\frac{\sum_{A\in\A^\high_j(t)}\pi_j(A)}{h_j(t)}.
\]
Consequently,
\begin{align*}
\theta_j(t)
&=
\frac{\sum_{A\in\A_j^\high(t)}\pi_j(A)}
{\sum_{A\in\A_j^\high(t)}\pi_j(A)
 +\sum_{A\in\A_j^\low(t)}\pi_j(A)}\\
&\geq
\frac{h_j(t)}{h_j(t)+\ell_j(t)}.
\end{align*}
When \(h_j(t)=\ell_j(t)=0\), both sides are zero by definition.
\end{proof}

\subsection{A supermartingale for the high-low statistic}

We next show that \((Z_j(t))_{j\geq0}\) is a bounded supermartingale for every \(t\geq0\).

\begin{lemma}\label{lem:high-low-super-martingale}

In every \(\beta\)-sampling game, for every \(j\geq0\) and \(t\geq0\),
\[
\E[Z_{j+1}(t)\mid\Ev_j]\leq Z_j(t).
\]
Consequently, for every \(0\leq s<k\),
\[
\E[Z_k(t)\mid\Ev_s]\leq Z_s(t).
\]
\end{lemma}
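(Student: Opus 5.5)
The plan is to prove the one-step inequality \(\E[Z_{j+1}(t)\mid\Ev_j]\le Z_j(t)\) by splitting round \(j+1\) into two stages: an optional deletion, followed by the weight decrease. I will show that the decrease stage can never increase \(Z\) pathwise, and that the deletion stage does not increase \(Z\) in conditional expectation. Since \(0\le Z_j(t)\le 1\) and \(Z_j(t)\) is a function of \(\S_j,w_j\), the sequence is \(\Ev_j\)-measurable and bounded. The multi-step statement then follows at once from Lemma~\ref{lem:tower}. If the game has stopped, I will treat the state as frozen, so \(Z\) is constant and the inequality is trivial.

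\emph{Step 1 (weight decreases are monotone).} Write \(Z=h/(h+\ell)\). Because \(0\le w_{j+1}(A)\le w_j(A)\), each surviving element can only move downward among the classes: high\(\to\)mid, high\(\to\)low, or mid\(\to\)low, and never upward. The counts change accordingly.
\begin{itemize}
\item A move high\(\to\)mid replaces \((h,\ell)\) by \((h-1,\ell)\).
\item A move high\(\to\)low replaces \((h,\ell)\) by \((h-1,\ell+1)\).
\item A move mid\(\to\)low replaces \((h,\ell)\) by \((h,\ell+1)\).
\end{itemize}
The map \((h,\ell)\mapsto h/(h+\ell)\), with value \(0\) when \(h+\ell=0\), is nondecreasing in \(h\) and nonincreasing in \(\ell\). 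So each of these moves weakly decreases \(Z\). A short check covers the degenerate cases, for instance \((1,0)\to(0,0)\) gives \(1\to 0\). Applying the moves one element at a time shows that the post-decrease value of \(Z\) is at most the value \(\widetilde Z\) immediately after the (possible) deletion. The same holds on \(D_{j+1}=0\), where no deletion occurs, giving \(Z_{j+1}(t)\le Z_j(t)\) there.

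\emph{Step 2 (the deletion).} Condition on \(\Ev_j\) and \(D_{j+1}=1\), and write \(h=h_j(t)\), \(\ell=\ell_j(t)\), \(N=h+\ell\).
\begin{itemize}
\item If \(A^*\in\A^\midd_j(t)\), then \(h\) and \(\ell\) are unchanged and \(\widetilde Z=Z_j(t)\).
\item Otherwise, if \(N=1\), then \(\widetilde Z=0\le Z_j(t)\).
\item If \(N\ge2\), then \(\widetilde Z\) equals \((h-1)/(N-1)\) with conditional probability \(\theta_j(t)\), and \(h/(N-1)\) otherwise.
\end{itemize}
In the last case, the conditional expectation of \(\widetilde Z\) given \(A^*\notin\A^\midd_j(t)\) is
\[
\frac{h-\theta_j(t)}{N-1}\le\frac{h-h/N}{N-1}=\frac hN=Z_j(t),
\]
using \(\theta_j(t)\ge h/N\) from Lemma~\ref{lem:high-low-prob}. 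Averaging over the events \(A^*\in\A^\midd_j(t)\) and \(A^*\notin\A^\midd_j(t)\), and then over \(D_{j+1}\), gives \(\E[Z_{j+1}(t)\mid\Ev_j]\le Z_j(t)\).

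\emph{Main obstacle.} The substantive inequality is exactly the place where Lemma~\ref{lem:high-low-prob} is used, and that lemma is already available. What remains delicate is bookkeeping. First, the conditioning on \(A^*\notin\A^\midd_j(t)\) must be combined correctly with the event where \(A^*\) lies in the middle class. Second, the stopping convention of Definition~\ref{def:sampling-game} needs care. For example, at \(t=0\) every element is high, so I must make sure stopping cannot spuriously raise \(Z\). Third, the monotonicity in Step 1 must be checked at the boundary \(h+\ell=0\). I would handle all three by explicit case checks rather than by any further probabilistic argument.
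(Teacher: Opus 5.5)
Your proof is correct and follows essentially the same route as the paper. Like the paper, you split the round into deletion followed by weight decrease, show the decrease can only lower $Z_j(t)$ pathwise, use $\theta_j(t)\ge h/(h+\ell)$ from Lemma~\ref{lem:high-low-prob} to get $(h-\theta)/(h+\ell-1)\le h/(h+\ell)$ on the non-middle deletion event, and finish with Lemma~\ref{lem:tower}. The only cosmetic differences are in justification: you handle the decrease step by coordinatewise monotonicity of $h/(h+\ell)$ under single-element moves, whereas the paper writes out the explicit difference $\frac{a\ell+b(h+\ell)+ch}{(h+\ell)(h+\ell+c-a)}$; and you split the deletion step by $N=1$ versus $N\ge 2$ rather than $h=0$, $\ell=0$, and $h,\ell>0$.
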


\begin{proof}
    We consider \(D_{j+1}=0\) and \(D_{j+1}=1\) separately. First suppose \(D_{j+1}=0\). No element is deleted, and every weight can only decrease. Hence, only the following transitions are possible:
    \[
    \high\to\midd,\qquad
    \high\to\low,\qquad
    \midd\to\low.
    \]
    Let
    \begin{align*}
        a_j(t) &:= |\A^\high_j(t) \cap \A_{j+1}^\midd(t)| \\
        b_j(t) &:= |\A^\high_j(t) \cap \A_{j+1}^\low(t)| \\
        c_j(t) &:= |\A^\midd_j(t) \cap \A_{j+1}^\low(t)| 
    \end{align*}

    and, for brevity, write \(a,b,c,h,\ell\) for their values at round \(j\) and threshold \(t\). Then
    \[
    h_{j+1}(t)=h-a-b,\qquad
    \ell_{j+1}(t)=\ell+b+c,
    \]
    so \(h_{j+1}(t)+\ell_{j+1}(t)=h+\ell+c-a\).

    If \(h+\ell=0\), then no element can enter the high class, so \(Z_j(t)=Z_{j+1}(t)=0\). If \(h+\ell>0\) but \(h+\ell+c-a=0\), then \(Z_{j+1}(t)=0\leq Z_j(t)\). Otherwise,
    \[
    Z_j(t)-Z_{j+1}(t)
    =
    \frac{a\ell+b(h+\ell)+ch}
    {(h+\ell)(h+\ell+c-a)}
    \geq0.
    \]
    Therefore,
    \[
    \E[Z_{j+1}(t)\mid\Ev_j,D_{j+1}=0]\leq Z_j(t).
    \]

    Now suppose \(D_{j+1}=1\). Let \(\widetilde Z_{j+1}(t)\) denote the statistic immediately after \(A^*\) is deleted and before the surviving weights decrease. The preceding argument shows that
    \[
    Z_{j+1}(t)\leq\widetilde Z_{j+1}(t).
    \]
    If \(A^*\in\A_j^\midd(t)\), then deletion changes neither \(h\) nor \(\ell\), and hence \(\widetilde Z_{j+1}(t)=Z_j(t)\).

    It remains to consider \(A^*\notin\A_j^\midd(t)\). If \(h=0\), then no element can enter the high class, so \(Z_{j+1}(t)=Z_j(t)=0\). If \(\ell=0\), then \(Z_j(t)=1\), and the desired inequality is immediate. We may therefore assume that \(h,\ell>0\), so \(h+\ell-1\geq1\). With \(\theta:=\theta_j(t)\), Lemma~\ref{lem:high-low-prob} gives
    \begin{align*}
    &\E\!\left[
      \widetilde Z_{j+1}(t)
      \,\middle|\,
      \Ev_j,D_{j+1}=1,A^*\notin\A_j^\midd(t)
    \right]\\
    &\qquad=
    \frac{h-1}{h+\ell-1}\theta
    +\frac{h}{h+\ell-1}(1-\theta)
    =
    \frac{h-\theta}{h+\ell-1}
    \leq
    \frac{h}{h+\ell}
    =
    Z_j(t).
    \end{align*}

    Writing
    \[
    p_{\mathrm{mid}}
    :=
    \Pr(A^*\in\A_j^\midd(t)\mid\Ev_j,D_{j+1}=1),
    \]
    and recombining the two subcases,
    \begin{align*}
    \E[\widetilde Z_{j+1}(t)\mid\Ev_j,D_{j+1}=1]
    &=
    p_{\mathrm{mid}}Z_j(t)\\
    &\quad+
    (1-p_{\mathrm{mid}})
    \E[\widetilde Z_{j+1}(t)
      \mid\Ev_j,D_{j+1}=1,A^*\notin\A_j^\midd(t)]\\
    &\leq Z_j(t).
    \end{align*}
    Since \(Z_{j+1}(t)\leq\widetilde Z_{j+1}(t)\), the same bound holds for \(Z_{j+1}(t)\). Recombining the cases \(D_{j+1}=0\) and \(D_{j+1}=1\) proves the one-step inequality. Since \(0\leq Z_j(t)\leq1\), the process is a bounded supermartingale, and the final claim follows from Lemma~\ref{lem:tower}.
\end{proof}

\subsection{Tail integration}

For the final part of the proof, define the normalized tail function of the weights on \(\S_j\):
\[
N_j(t):=
\begin{cases}
\dfrac{|\{A\in\S_j:w_j(A)\geq t\}|}{|\S_j|},&|\S_j|>0,\\[6pt]
0,&|\S_j|=0.
\end{cases}
\]

The average weight is related to \(N_j(t)\) by the layer-cake identity.

\begin{lemma}\label{lem:tail}
\[
\rho_j=\int_0^\infty N_j(t)\,dt.
\]
\end{lemma}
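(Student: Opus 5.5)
The plan is to prove Lemma~\ref{lem:tail} by the standard layer-cake representation of a nonnegative number as the integral of its indicator tail, applied elementwise and then averaged. First we dispose of the degenerate case: if \(\S_j=\varnothing\), or the game has stopped so that \(\rho_j=0\) by convention, then \(N_j(t)=0\) for all \(t\) by definition. The integral is \(0\) and the identity holds. (If the game stopped because \(w_j(\S_j)=0\) with \(\S_j\neq\varnothing\), then all weights are zero, so \(N_j(t)=0\) for every \(t>0\). The integral is again \(0=\rho_j\).)

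For \(|\S_j|>0\), write each weight as
\[
w_j(A)=\int_0^{w_j(A)}dt=\int_0^\infty \one\{w_j(A)\geq t\}\,dt,
\]
which holds because \(w_j(A)\geq 0\). The indicator equals \(1\) exactly on \([0,w_j(A)]\), and the single endpoint has measure zero, so using \(\geq\) rather than \(>\) does not matter.

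Summing over the finitely many \(A\in\S_j\) and exchanging the finite sum with the integral (by linearity, or Tonelli, since everything is nonnegative) gives
\[
w_j(\S_j)=\int_0^\infty |\{A\in\S_j:w_j(A)\geq t\}|\,dt .
\]
Dividing by \(|\S_j|\) yields \(\rho_j=\int_0^\infty N_j(t)\,dt\). Note that \(N_j\) is a nonincreasing step function vanishing for \(t>\max_A w_j(A)\), so the integral is a finite Riemann integral.

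There is no real obstacle here. The only care needed is in matching the stopping conventions of Definition~\ref{def:sampling-game} with the definition of \(N_j\), and in the harmless endpoint issue of \(\geq\) versus \(>\).
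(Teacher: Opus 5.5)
Your proof is correct and follows the paper's argument essentially verbatim: write each weight as $w_j(A)=\int_0^\infty \one\{w_j(A)\geq t\}\,dt$, exchange the finite sum with the integral, and divide by $|\S_j|$, with the case $\S_j=\varnothing$ handled by convention. Your separate treatment of a stopped game with $w_j(\S_j)=0$ and $\S_j\neq\varnothing$ is harmless but not needed, since the general computation already gives $0=0$ there.
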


\begin{proof}
Recall that $\rho_j$ denotes the average weight over $\EuScript{S}_j$, i.e.
\[
\rho_j = \frac{1}{|\EuScript{S}_j|}\sum_{A \in \EuScript{S}_j} w_j(A),
\]
with the convention $\rho_j = 0$ when $|\EuScript{S}_j| = 0$ (matching the convention used to define $N_j(t)$).

\textbf{Case $|\EuScript{S}_j| = 0$.} Both sides are $0$ by convention, so the identity holds trivially.

\textbf{Case $|\EuScript{S}_j| > 0$.} We use the standard tail-integral (layer-cake) identity: for any $x \geq 0$,
\[
x = \int_0^\infty \mathbf{1}[x \geq t] \, dt.
\]
Applying this to each $w_j(A)$ for $A \in \EuScript{S}_j$ and summing,
\[
\sum_{A \in \EuScript{S}_j} w_j(A) = \sum_{A \in \EuScript{S}_j} \int_0^\infty \mathbf{1}[w_j(A) \geq t] \, dt.
\]
Since $\EuScript{S}_j$ is finite, we may exchange the (finite) sum and the integral:
\[
\sum_{A \in \EuScript{S}_j} \int_0^\infty \mathbf{1}[w_j(A) \geq t] \, dt = \int_0^\infty \sum_{A \in \EuScript{S}_j} \mathbf{1}[w_j(A) \geq t] \, dt = \int_0^\infty |\{A \in \EuScript{S}_j : w_j(A) \geq t\}| \, dt.
\]
Dividing both sides by \(|\EuScript{S}_j|\) and pulling the constant inside the integral,
\[
\rho_j = \frac{1}{|\EuScript{S}_j|}\sum_{A \in \EuScript{S}_j} w_j(A) = \int_0^\infty \frac{|\{A \in \EuScript{S}_j : w_j(A) \geq t\}|}{|\EuScript{S}_j|} \, dt = \int_0^\infty N_j(t)\, dt,
\]
using the definition of $N_j(t)$ in the last step. This completes the proof.
\end{proof}

Let us relate this to the high-low statistic defined earlier. Fix \(0\leq s<k\). We have\footnote{When the denominators are zero both sides are zero by definition.}

\[
N_s(t)
=\frac{m_s(t)+h_s(t)}{|\S_s|}
=\frac{h_s(t)+m_s(t)}{h_s(t)+m_s(t)+\ell_s(t)}
\geq
\frac{h_s(t)}{h_s(t)+\ell_s(t)}
=Z_s(t).
\]

Similarly, 

\[
N_k(\beta^2t)
=\frac{h_k(t)}{|\S_k|}
=\frac{h_k(t)}{h_k(t)+m_k(t)+\ell_k(t)}
\leq
\frac{h_k(t)}{h_k(t)+\ell_k(t)}
=Z_k(t).
\]

Using Lemma~\ref{lem:high-low-super-martingale} we have $\E[Z_k(t)|\Ev_s] \leq Z_s(t)$ so that 

\[
\E[N_k(\beta^2t)\mid\Ev_s]
\leq
\E[Z_k(t)\mid\Ev_s]
\leq
Z_s(t)
\leq
N_s(t).
\]

Integrating both sides over $t \in [0,\infty)$ gives
\[
\int_0^\infty \mathbf{E}[N_k(\beta^2 t) \mid \EuScript{E}_s] \, dt \;\leq\; \int_0^\infty N_s(t)\, dt.
\]
By Lemma~\ref{lem:tail}, the right-hand side is exactly $\rho_s$. For the left-hand side, since $N_k(\beta^2 t) \geq 0$, we may swap the (conditional) expectation and the integral over $t$ by Tonelli's theorem, and then substitute $x = \beta^2 t$ (so $dt = dx/\beta^2$):
\[
\int_0^\infty \mathbf{E}[N_k(\beta^2 t) \mid \EuScript{E}_s] \, dt
= \mathbf{E}\left[ \int_0^\infty N_k(\beta^2 t)\, dt \;\middle|\; \EuScript{E}_s \right]
= \mathbf{E}\left[ \frac{1}{\beta^2}\int_0^\infty N_k(x)\, dx \;\middle|\; \EuScript{E}_s \right]
= \frac{1}{\beta^2}\, \mathbf{E}[\rho_k \mid \EuScript{E}_s],
\]
where the last equality uses Lemma~\ref{lem:tail} applied to round $k$. Combining the two displays,
\[
\frac{1}{\beta^2}\, \mathbf{E}[\rho_k \mid \EuScript{E}_s] \;\leq\; \rho_s,
\]
and multiplying both sides by $\beta^2$ gives
\[
\mathbf{E}[\rho_k \mid \EuScript{E}_s] \;\leq\; \beta^2 \rho_s,
\]
for every \(0\leq s<k\), as required.

\section{Potential Analysis for Noisy \(k\)-Means++} \label{sec:potential-analysis}

We recall the potential-based analysis of \cite{dasgupta2013} and \cite{behs2020}. For \(0\leq\eps<1\), write
\[
r:=\frac{1+\eps}{1-\eps}.
\]

Let \(A_1,\dots,A_k\) be optimal clusters of \(X\), so \(X=\bigcup_{\nu\in[k]}A_\nu\). Up to an arbitrary assignment of boundary points, this partition corresponds to optimal centers \(C^*=\{c_1^*,\dots,c_k^*\}\), where \(c_\nu^*=\mu(A_\nu)\) and \(\mu(A):=|A|^{-1}\sum_{a\in A}a\). Thus,
\[
\opt_k(X)
=
\sum_{\nu\in[k]}\opt_1(A_\nu)
=
\sum_{\nu\in[k]}\phi(A_\nu,\mu(A_\nu)).
\]

For every set of centers \(C\), the same partition gives
\[
\phi(X,C)=\sum_{\nu\in[k]}\phi(A_\nu,C).
\]

\alg proceeds in \(k\) rounds. Let \(Y_j\) be the center sampled in round \(j\), and let \(C_j=\{Y_1,\dots,Y_j\}\). Once some sampled center belongs to an optimal cluster, that cluster's expected contribution is bounded by an \(O_\eps(1)\) factor of its optimum contribution. We use the following standard lemmas.
\begin{lemma}  \label{lem:first-center}(\citet[Lemma 4]{behs2020}, adapted from \citet[Lemma 5]{dasgupta2013})
    Let $C_1 = \{Y_1\}$ denote the first center chosen by \alg. Then for any optimal cluster $A_\nu$, 
    $$ \E[\phi(A_\nu,C_1) | Y_1 \in A_\nu] \leq 2r \cdot \opt_1(A_\nu) $$
\end{lemma}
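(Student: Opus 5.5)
The plan is to reduce to the exact uniform case by comparing the noisy conditional distribution of \(Y_1\) given \(Y_1\in A_\nu\) with the uniform distribution on \(A_\nu\), and then apply the classical one-cluster identity.

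\textbf{Step 1: compare the conditional distribution with the uniform one.} By the first step of Algorithm~\ref{alg:noisy}, \(q_1(x)\in[(1-\eps)/n,(1+\eps)/n]\) for every \(x\in X\). Hence, for \(a\in A_\nu\),
\[
\Pr(Y_1=a\mid Y_1\in A_\nu)=\frac{q_1(a)}{\sum_{b\in A_\nu}q_1(b)}\le\frac{(1+\eps)/n}{|A_\nu|(1-\eps)/n}=\frac{r}{|A_\nu|}.
\]
So the conditional law is pointwise at most \(r\) times the uniform law on \(A_\nu\). This is the only place the noise enters.

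\textbf{Step 2: bound the conditional expectation.} Since \(\phi(A_\nu,\{a\})\ge 0\), we get
\[
\E[\phi(A_\nu,C_1)\mid Y_1\in A_\nu]=\sum_{a\in A_\nu}\Pr(Y_1=a\mid Y_1\in A_\nu)\,\phi(A_\nu,\{a\})\le\frac{r}{|A_\nu|}\sum_{a\in A_\nu}\phi(A_\nu,\{a\}).
\]

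\textbf{Step 3: apply the bias--variance identity.} For any \(z\), \(\phi(A,\{z\})=\opt_1(A)+|A|\,\|z-\mu(A)\|^2\). Averaging over \(z=a\in A_\nu\) gives
\[
\frac{1}{|A_\nu|}\sum_{a\in A_\nu}\phi(A_\nu,\{a\})=\opt_1(A_\nu)+\sum_{a\in A_\nu}\|a-\mu(A_\nu)\|^2=2\,\opt_1(A_\nu).
\]
Combining Steps 2 and 3 yields the claimed bound \(2r\cdot\opt_1(A_\nu)\).

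There is no real obstacle here. The only care needed is in Step 1: the denominator \(\Pr(Y_1\in A_\nu)\) must be lower bounded using the \((1-\eps)\) side, which gives the factor \(r\) rather than \(1+\eps\). Also, the event \(Y_1\in A_\nu\) has positive probability, since \(A_\nu\neq\varnothing\) and \(q_1\ge(1-\eps)/n>0\).
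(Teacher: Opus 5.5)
Your proof is correct. It is essentially the standard argument behind this lemma; the paper does not reprove it and imports it from \citet[Lemma 4]{behs2020}. The argument has two steps: the law of \(Y_1\) conditioned on \(Y_1\in A_\nu\) is pointwise at most \(r\) times the uniform law on \(A_\nu\), because the \((1-\eps)/n\) lower bound controls the denominator, and the uniform case gives exactly \(2\,\opt_1(A_\nu)\) by the bias--variance identity.
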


\begin{lemma}\label{lem:other-centers} (\citet[Lemma 5]{behs2020}, adapted from \citet[Lemma 6]{dasgupta2013})
    Consider an iteration $1 \leq j <k$ of \alg after the first one and suppose $C_j \neq \emptyset$ are the centers chosen until now. Let the center chosen in the considered iteration be $Y_{j+1}$. Then for any optimal cluster $A_\nu$ 
    $$ \E[\phi(A_\nu,C_{j+1}) | \F_j, Y_{j+1} \in A_\nu] \leq 8r \cdot \opt_1(A_\nu) $$
\end{lemma}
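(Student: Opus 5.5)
We adapt the proof of the exact statement (Lemma 6 of \cite{dasgupta2013}, equivalently Lemma~3.2 of \cite{av2007}) by tracking where the noisy probabilities enter. Condition on \(\F_j\) and on \(Y_{j+1}\in A_\nu\). The noisy conditional law of \(Y_{j+1}\) on \(A_\nu\) is
\[
\frac{q_{j+1}(a)}{\sum_{b\in A_\nu}q_{j+1}(b)}.
\]
The sampling rule bounds \(q_{j+1}(a)\) above by \((1+\eps)\phi(a,C_j)/\phi(X,C_j)\). It also gives \(\sum_{b\in A_\nu}q_{j+1}(b)\ge(1-\eps)\phi(A_\nu,C_j)/\phi(X,C_j)\). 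Dividing, the conditional probability of \(a\) is at most \(r\cdot \phi(a,C_j)/\phi(A_\nu,C_j)\), pointwise. This is the same observation made in the remark on the leading constant.

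Since \(\phi(A_\nu,C_j\cup\{a\})\) is a nonnegative quantity, we get
\[
\E[\phi(A_\nu,C_{j+1})\mid\F_j,Y_{j+1}\in A_\nu]\le r\sum_{a\in A_\nu}\frac{\phi(a,C_j)}{\phi(A_\nu,C_j)}\sum_{x\in A_\nu}\min\{\phi(x,C_j),\|x-a\|^2\}.
\]
The right-hand side is \(r\) times the exact \(D^2\) conditional expectation. It therefore suffices to show that this exact expectation is at most \(8\,\opt_1(A_\nu)\).

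That bound is the classical argument, which we reproduce:
\begin{enumerate}
\item For each \(x\in A_\nu\), write \(D(x)=\sqrt{\phi(x,C_j)}\). The triangle inequality and \((u+v)^2\le 2u^2+2v^2\) give \(D(a)^2\le \frac{2}{|A_\nu|}\sum_{x\in A_\nu}(D(x)^2+\|x-a\|^2)\).
\item Substitute this into the numerator. In each resulting term, bound \(\min\{D(x)^2,\|x-a\|^2\}\) by \(\|x-a\|^2\) in one term and by \(D(x)^2\) in the other, so that the factor \(\sum_x D(x)^2=\phi(A_\nu,C_j)\) cancels with the denominator.
\item This yields \(\frac{4}{|A_\nu|}\sum_{a,x\in A_\nu}\|x-a\|^2\).
\item Apply the standard identity \(\sum_{a,x\in A}\|x-a\|^2=2|A|\,\opt_1(A)\). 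This gives the bound \(8\,\opt_1(A_\nu)\).
\end{enumerate}

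The only delicate step is the first one: justifying the pointwise ratio bound under conditioning, in particular that the adversary's adaptive choice of \(q_{j+1}\) is \(\F_j\)-measurable. We must also handle the degenerate case \(\phi(A_\nu,C_j)=0\). There, \(Y_{j+1}\in A_\nu\) has probability zero, or the cost is already zero, so the bound holds trivially. Everything else is the noiseless computation multiplied by \(r\).
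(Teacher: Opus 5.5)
Your proof is correct and follows the same route the paper relies on. The paper cites \citet[Lemma 5]{behs2020}, and, as its remark on the leading constant indicates, that argument is exactly your observation: conditioned on $Y_{j+1}\in A_\nu$, the noisy law is pointwise at most $r$ times the exact $D^2$ conditional law, after which the classical $8\,\opt_1(A_\nu)$ computation of \cite{av2007,dasgupta2013} goes through unchanged. The measurability issue you flag is harmless: the pointwise constraints are linear, so $\Pr(Y_{j+1}=x\mid\F_j)$, which mixes over any internal adversarial randomness, still satisfies them.
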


This does not yield an \(O_\eps(1)\) guarantee because several sampled centers may belong to the same optimal cluster. We therefore distinguish covered and uncovered optimal clusters.

\begin{definition}\label{def:cover-uncover}
Let $1 \leq j \leq k$.  An optimal cluster $A_\nu$ is said to be covered after round $j$ if at least one of the first $j$ sampled centers belongs to it. We have the following definitions: 

$$\H_j = \{A_\nu : C_j \cap A_\nu \neq \emptyset\} \quad \U_j = \{A_\nu : C_j \cap A_\nu = \emptyset\} \quad u_j = |\U_j|$$ 

$$ H_j = \bigcup_{A_\nu \in \H_j} A_\nu \quad U_j = \bigcup_{A_\nu \in \U_j} A_\nu $$

After round \(j\), the dataset is partitioned as \(X=H_j\cup U_j\). Since at most \(j\) optimal clusters can be covered, \(u_j\geq k-j\). Define
\[
I_j:=\one\{Y_j\in H_{j-1}\},
\]
so \(I_j=1\) exactly when round \(j\) is wasted.

\end{definition}

We use \(H_j,U_j\) for sets of points and the calligraphic notation \(\H_j,\U_j\) for sets of clusters. We also record the sampling history.
\begin{definition} \label{def:filtration}
    Let \(Y_j\in X\) denote the point sampled in round \(j\), with the
initial sample regarded as round \(1\), and let
\[
    C_j:=\{Y_1,\ldots,Y_j\}.
\]
We write
\[
    \F_0:=\{\varnothing,\Omega\},
    \qquad
    \F_j:=\sigma(Y_1,\ldots,Y_j),
    \quad j=1,\ldots,k,
\]
for the natural filtration generated by the sampling process. Here, $\Omega$ is the set of all possible executions of the algorithm.  Thus,
\(\F_j\) represents the complete history of the algorithm up
to and including round \(j\). In particular, the random variables
\(C_j,H_j,U_j,u_j\), and \(I_j\) are
\(\F_j\)-measurable.
\end{definition}

Lemmas~\ref{lem:first-center} and~\ref{lem:other-centers} imply the following bound.

\begin{lemma}\label{lem:covered-clusters} (\citet[Corollary 6]{behs2020}) For each $j$ we have 
$$\E[\phi(H_j,C_j)] \leq 8r \cdot \opt_k(X)$$
\end{lemma}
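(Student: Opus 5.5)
We prove the bound cluster by cluster and sum. Fix \(j\) and write
\[
\phi(H_j,C_j)=\sum_{\nu\in[k]}\one\{A_\nu\in\H_j\}\,\phi(A_\nu,C_j).
\]
For each optimal cluster \(A_\nu\), let \(\tau_\nu\) be the first round \(i\le j\) with \(Y_i\in A_\nu\). The summand is nonzero only on the event \(\{\tau_\nu\le j\}\). Since centers are only added, \(\phi(A_\nu,C_j)\le\phi(A_\nu,C_{\tau_\nu})\) on that event, so
\[
\E[\phi(H_j,C_j)]\le\sum_{\nu}\sum_{i=1}^{j}\E\big[\one\{\tau_\nu=i\}\,\phi(A_\nu,C_i)\big].
\]

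Next we bound each term by conditioning on \(\F_{i-1}\). The event \(\{\tau_\nu=i\}\) is the intersection of the \(\F_{i-1}\)-measurable event \(\{A_\nu\in\U_{i-1}\}\) with \(\{Y_i\in A_\nu\}\). This gives
\[
\E[\one\{\tau_\nu=i\}\phi(A_\nu,C_i)]=\E\Big[\one\{A_\nu\in\U_{i-1}\}\Pr(Y_i\in A_\nu\mid\F_{i-1})\,\E[\phi(A_\nu,C_i)\mid\F_{i-1},Y_i\in A_\nu]\Big].
\]
For \(i=1\) we apply Lemma~\ref{lem:first-center}, which gives the factor \(2r\le 8r\). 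For \(i\ge2\) we apply Lemma~\ref{lem:other-centers}, which gives \(8r\,\opt_1(A_\nu)\). Since the events \(\{\tau_\nu=i\}\), \(i=1,\dots,j\), are disjoint, summing over \(i\) leaves
\[
8r\,\opt_1(A_\nu)\Pr(\tau_\nu\le j)\le 8r\,\opt_1(A_\nu).
\]
Summing over \(\nu\) and using \(\sum_\nu\opt_1(A_\nu)=\opt_k(X)\) yields the claim.

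The main point needing care is the conditioning step. Lemma~\ref{lem:other-centers} is stated conditionally on \(\F_j\) and on \(Y_{j+1}\in A_\nu\), with no requirement that \(A_\nu\) be uncovered, so it applies uniformly and we use it only on the \(\F_{i-1}\)-measurable event that \(A_\nu\) is still uncovered. We must also treat events of probability zero: when \(\Pr(Y_i\in A_\nu\mid\F_{i-1})=0\) the term vanishes, so the convention for the conditional expectation there is irrelevant. Finally, the monotonicity \(\phi(A_\nu,C_j)\le\phi(A_\nu,C_i)\) for \(i\le j\) is immediate because \(C_i\subseteq C_j\).
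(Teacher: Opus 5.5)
Your argument is correct and follows the paper's route: the paper cites \citet[Corollary 6]{behs2020} and notes that the bound follows from Lemmas~\ref{lem:first-center} and~\ref{lem:other-centers}, namely that each optimal cluster's expected cost is controlled at the round it is first hit, can only decrease afterwards, and the bounds are summed over clusters. Your first-hit decomposition via \(\tau_\nu\), with conditioning on \(\F_{i-1}\), is a careful rigorous rendering of that standard argument.
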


The main technical challenge is to bound the uncovered clusters. Following \cite{behs2020}, define
\[
\rho_j :=
\begin{cases}
\dfrac{\phi(U_j,C_j)}{u_j}, & u_j>0,\\[1ex]
0, & u_j=0,
\end{cases}
\qquad
\Psi_k := \sum_{j=2}^k I_j \rho_j,
\] 

We use the following bound from \citet[Lemma~7]{behs2020}.

\begin{lemma}\label{lem:upper-bound-potential}
$$
\E[\Psi_k]
\leq
8r^2(\ln k+1)\opt_k(X).
$$
\end{lemma}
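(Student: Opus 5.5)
We need to prove $\E[\Psi_k]\le 8r^2(\ln k+1)\opt_k(X)$, where $\Psi_k=\sum_{j\ge2} I_j\rho_j$. The plan is to follow the potential analysis of \cite{dasgupta2013} and \cite{behs2020}, using the cruder step-by-step bound at this stage, since the statement carries only $r^2$. The key observation is that $\rho_j$ is the average, over uncovered clusters, of their current cost. The goal is to bound $\E[I_j\rho_j]$ by a quantity of order $\frac{r^2}{k-j+1}\cdot\opt$-like terms, and then sum the harmonic series to get $\ln k+1$.

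\textbf{Step 1: the chance that round $j$ is wasted.} Conditioned on $\F_{j-1}$, round $j$ samples from $H_{j-1}$ with probability at most
\[
(1+\eps)\frac{\phi(H_{j-1},C_{j-1})}{\phi(X,C_{j-1})}.
\]
Moreover, $\rho_j\le \rho_{j-1}$-type quantities hold, because on a wasted round the uncovered set is unchanged and costs only decrease. So
\[
\E[I_j\rho_j\mid\F_{j-1}]\le (1+\eps)\frac{\phi(H_{j-1},C_{j-1})}{\phi(X,C_{j-1})}\cdot\frac{\phi(U_{j-1},C_{j-1})}{u_{j-1}}.
\]

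\textbf{Step 2: bounding the product.} Since $\phi(U_{j-1},C_{j-1})\le\phi(X,C_{j-1})$, the product is at most
\[
(1+\eps)\,\frac{\phi(H_{j-1},C_{j-1})}{u_{j-1}}\le (1+\eps)\frac{\phi(H_{j-1},C_{j-1})}{k-j+1},
\]
using $u_{j-1}\ge k-j+1$. Taking expectations and applying Lemma~\ref{lem:covered-clusters}, which gives $\E[\phi(H_{j-1},C_{j-1})]\le 8r\,\opt_k(X)$, we obtain
\[
\E[I_j\rho_j]\le \frac{8r(1+\eps)}{k-j+1}\opt_k(X).
\]

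\textbf{Step 3: summing.} Summing over $j=2,\dots,k$ gives $H_{k-1}\le\ln k+1$, and since $1+\eps\le r$ this yields the claimed bound $8r^2(\ln k+1)\opt_k(X)$.

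\textbf{Main obstacle.} The delicate point is Step 1: $I_j$ and $\rho_j$ are correlated, and $\rho_j$ depends on the round-$j$ sample. On the event $I_j=1$, however, $U_j=U_{j-1}$ and $u_j=u_{j-1}$, while costs can only drop. Hence $I_j\rho_j\le I_j\rho_{j-1}$ pointwise, and the conditional expectation factorizes as $\Pr(I_j=1\mid\F_{j-1})\,\rho_{j-1}$. This pointwise domination is what makes the argument go through.
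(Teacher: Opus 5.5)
Your proposal is correct and follows essentially the same argument as the paper: the pointwise domination $I_j\rho_j\le I_j\rho_{j-1}$, a noisy bound on $\Pr(I_j=1\mid\F_{j-1})$, the bound $u_{j-1}\ge k-j+1$, Lemma~\ref{lem:covered-clusters}, and the harmonic sum. The only difference is minor: you bound the wasted-round probability by $(1+\eps)\,\phi(H_{j-1},C_{j-1})/\phi(X,C_{j-1})$ and then use $\phi(U_{j-1},C_{j-1})\le\phi(X,C_{j-1})$, whereas the paper uses the ratio bound $r\,\phi(H_{j-1},C_{j-1})/\phi(U_{j-1},C_{j-1})$; your version yields the slightly sharper factor $8r(1+\eps)$ and avoids the separate case $\phi(U_{j-1},C_{j-1})=0$.
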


\paragraph{Proof sketch.}
Fix \(2\leq j\leq k\) and condition on \(\F_{j-1}\). If round \(j\) is wasted, then \(\U_j=\U_{j-1}\), \(u_j=u_{j-1}\), and adding a center cannot increase the cost of any uncovered cluster. Hence,
\[
\E[I_j\rho_j\mid\F_{j-1}]
\leq
\Pr(I_j=1\mid\F_{j-1})\rho_{j-1}.
\]
If \(\phi(U_{j-1},C_{j-1})=0\), this conditional contribution is zero and there is nothing to prove. Otherwise, the noisy sampling inequalities imply
\[
\Pr(I_j=1\mid\F_{j-1})
\leq
r\,\frac{\phi(H_{j-1},C_{j-1})}
        {\phi(U_{j-1},C_{j-1})}.
\]
Indeed, the numerator is at most \(1+\eps\) times its exact \(D^2\)-mass, the uncovered mass is at least \(1-\eps\) times its exact mass, and the probability of a wasted round is at most the ratio of these two noisy masses. Since
\(\rho_{j-1}=\phi(U_{j-1},C_{j-1})/u_{j-1}\) and \(u_{j-1}\geq k-j+1\),
\[
\E[I_j\rho_j]
\leq
\frac{r}{k-j+1}
\E[\phi(H_{j-1},C_{j-1})]
\leq
\frac{8r^2}{k-j+1}\opt_k(X),
\]
where the last inequality is Lemma~\ref{lem:covered-clusters}. Summing over \(j=2,\dots,k\) and using
\(\sum_{h=1}^{k-1}h^{-1}\leq\ln k+1\) proves the claim.

The following consequence of Theorem~\ref{thm:improved-sampling} is the main new ingredient in our analysis.

\begin{lemma}\label{lem:future-uncovered}
Fix a round $2\leq s\leq k$ and condition on the complete history $\F_s$. Then
$$
\E[\rho_k\mid\F_s]
\leq
r^2\rho_s,
\qquad
r=\frac{1+\eps}{1-\eps}.
$$
\end{lemma}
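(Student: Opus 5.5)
The plan is to embed the evolution of the uncovered clusters after round \(s\) into a \(\beta\)-sampling game with \(\beta=r\), and then invoke Theorem~\ref{thm:improved-sampling}.

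\textbf{The embedding.} Take the game's elements to be the uncovered optimal clusters \(\S_0:=\U_s\). The weight of a surviving cluster \(A_\nu\) at game round \(i\) is \(w_i(A_\nu):=\phi(A_\nu,C_{s+i})\). The filtration is \(\Ev_i:=\F_{s+i}\), enriched by any internal randomness of the adversary. Then \(\rho_0\) equals \(\rho_s\) as defined in Section~\ref{sec:potential-analysis}, and in general the game average at round \(i\) coincides with \(\rho_{s+i}\).

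\textbf{One algorithm round as one game round.} Consider round \(s+i+1\) of Algorithm~\ref{alg:noisy}.
\begin{itemize}
\item The game's deletion event is ``\(Y_{s+i+1}\in U_{s+i}\)''. Set \(D_{i+1}=\one\{Y_{s+i+1}\in U_{s+i}\}\) and \(d_{i+1}:=\Pr(Y_{s+i+1}\in U_{s+i}\mid\F_{s+i})\). This is \(\F_{s+i}\)-measurable because the adversary chooses \(q_{s+i+1}\) from the history.
\item On \(D_{i+1}=0\), no uncovered cluster becomes covered. Adding a center can only decrease every \(\phi(A_\nu,\cdot)\), which is the allowed monotone weight update.
\item On \(D_{i+1}=1\), the deleted element is the cluster containing \(Y_{s+i+1}\), and the remaining weights again decrease.
\end{itemize}

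\textbf{The key inequality.} We must check that, conditioned on \(D_{i+1}=1\), the induced distribution over uncovered clusters,
\[
\pi_i(A_\nu)=\frac{q(A_\nu)}{q(U_{s+i})},
\]
lies between \(r^{-1}\) and \(r\) times \(w_i(A_\nu)/w_i(\S_i)\). Since \(q(A_\nu)\in[(1\pm\eps)\phi(A_\nu,C)/\phi(X,C)]\) and \(q(U)\) lies in the same kind of range with \(\phi(U,C)\) in place of \(\phi(A_\nu,C)\), the ratio lies in \([\tfrac{1-\eps}{1+\eps},\tfrac{1+\eps}{1-\eps}]\cdot\phi(A_\nu,C)/\phi(U,C)\). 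This gives \(\beta=r\).

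\textbf{Degenerate cases.} If \(\phi(U_{s+i},C_{s+i})=0\), the game stops by convention. The true \(\rho\) then stays \(0\) thereafter, since weights only decrease, so the convention is consistent. When \(u=0\), both sides are zero.

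\textbf{Conclusion and the main obstacle.} Run the game for \(k-s\) rounds and apply Theorem~\ref{thm:improved-sampling} between game rounds \(0\) and \(k-s\); the theorem's proof works for any pair of rounds. This yields \(\E[\rho_k\mid\F_s]\le r^2\rho_s\). If \(s=k\), the claim is trivial.

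The main obstacle is the measure-theoretic bookkeeping. Theorem~\ref{thm:improved-sampling} is stated with the adversary choosing \(d_{j+1}\), and then \(\pi_j\), before a fresh Bernoulli draw and an independent sample. In the algorithm, by contrast, both are realized by a single draw \(Y\sim q\).

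I would handle this by decomposing the single draw. Sampling \(Y\sim q\) has the same law as the following two-step procedure:
\begin{enumerate}
\item draw \(D\sim\mathrm{Bernoulli}(q(U))\);
\item on \(D=1\), draw \(Y\) from \(q\) restricted to \(U\) and renormalized, and on \(D=0\), draw \(Y\) from \(q\) restricted to \(H\) and renormalized.
\end{enumerate}
The within-\(H\) choice and the within-cluster point choice are absorbed into the adversary's weight update, which is allowed to be randomized and history-dependent. The lemma concerns only conditional expectations, which depend only on this joint law, so the coupling justifies applying the game theorem.
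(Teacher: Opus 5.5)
Your proposal is correct and follows essentially the same route as the paper: the uncovered clusters with weights \(\phi(A,C_{s+\tau})\) are the game's elements, deletions are matched with non-wasted rounds, and the \([r^{-1},r]\) bounds on \(\pi_j\) come from summing the noisy \(D^2\) inequalities over a single cluster and over \(U_j\), dividing the two, and then invoking Theorem~\ref{thm:improved-sampling}. Your explicit decomposition of the single draw \(Y\sim q\) into a Bernoulli deletion step, a cluster choice, and a within-cluster point choice (absorbed into the randomized weight update) makes precise a step the paper leaves implicit when it says that \(A^\star\) is chosen according to \(\pi_j\).
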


\begin{proof}
We show that, after conditioning on \(\F_s\), the evolution of the uncovered optimal clusters is an instance of the \(\beta\)-sampling game with \(\beta=r\).

For $s\leq j\leq k$ and $A\in\U_j$, define
$$
w_j(A):=\phi(A,C_j).
$$
The total weight of the uncovered clusters is
$$
w_j(\U_j)
:=
\sum_{A\in\U_j}w_j(A)
=
\phi(U_j,C_j).
$$
Consequently,
$$
\rho_j=
\begin{cases}
\dfrac{w_j(\U_j)}{u_j},&u_j>0,\\[6pt]
0,&u_j=0,
\end{cases}
$$
is precisely their average weight.

If $u_s=0$, then all optimal clusters are already covered and both sides of the desired inequality are $0$. Similarly, if $w_s(\U_s)=0$, then the weights remain $0$ in every subsequent round because centers are only added. We may therefore assume that
$$
u_s>0
\qquad\text{and}\qquad
w_s(\U_s)>0.
$$

Fix $s\leq j<k$ and condition on $\F_j$. Let
$$
q_{j+1}(x)
:=
\Pr(Y_{j+1}=x\mid\F_j),
\qquad x\in X,
$$
be the noisy distribution used to sample the next center. By the noisy $D^2$-sampling rule,
$$
(1-\eps)
\frac{\phi(x,C_j)}{\phi(X,C_j)}
\leq
q_{j+1}(x)
\leq
(1+\eps)
\frac{\phi(x,C_j)}{\phi(X,C_j)}
$$
for every $x\in X$.

Suppose that
$$
\Pr(I_{j+1}=0\mid\F_j)>0.
$$
For each $A\in\U_j$, define
$$
\pi_j(A)
:=
\Pr(Y_{j+1}\in A\mid\F_j,I_{j+1}=0).
$$
Thus, $\pi_j(A)$ is the probability that $A$ is the newly covered cluster, conditioned on the round being non-wasted. Summing the noisy $D^2$-sampling inequalities over all points in $A$ gives
$$
(1-\eps)
\frac{w_j(A)}{\phi(X,C_j)}
\leq
\Pr(Y_{j+1}\in A\mid\F_j)
\leq
(1+\eps)
\frac{w_j(A)}{\phi(X,C_j)}.
$$
Moreover, the event $I_{j+1}=0$ is exactly the event that the sampled center belongs to one of the uncovered clusters. Therefore, summing over $A\in\U_j$ gives
$$
(1-\eps)
\frac{w_j(\U_j)}{\phi(X,C_j)}
\leq
\Pr(I_{j+1}=0\mid\F_j)
\leq
(1+\eps)
\frac{w_j(\U_j)}{\phi(X,C_j)}.
$$
Since the conditioning event has positive probability, we may divide the two bounds. Hence, for every $A\in\U_j$,
$$
\begin{aligned}
\pi_j(A)
&=
\frac{\Pr(Y_{j+1}\in A\mid\F_j)}
     {\Pr(I_{j+1}=0\mid\F_j)}\\
&\geq
\frac{1-\eps}{1+\eps}
\frac{w_j(A)}{w_j(\U_j)}
=
\frac{1}{r}\frac{w_j(A)}{w_j(\U_j)}.
\end{aligned}
$$
Similarly,
$$
\pi_j(A)
\leq
\frac{1+\eps}{1-\eps}
\frac{w_j(A)}{w_j(\U_j)}
=
r\frac{w_j(A)}{w_j(\U_j)}.
$$
Thus,
\begin{equation}
\label{eq:conditional-cluster-noise}
\frac{1}{r}\frac{w_j(A)}{w_j(\U_j)}
\leq
\pi_j(A)
\leq
r\frac{w_j(A)}{w_j(\U_j)}.
\end{equation}

We now verify the remaining rules of the sampling game. After conditioning on \(\F_s\), take \(\U_s\) as the initial set and assign each \(A\in\U_s\) its actual weight \(w_s(A)=\phi(A,C_s)\). We identify round \(\tau\) of the sampling game with round \(s+\tau\) of noisy \(D^2\)-sampling. Under this identification, the game's deletion event \(D_{\tau+1}=1\) corresponds exactly to the non-wasted event \(I_{s+\tau+1}=0\).

Consider the transition from round $j$ to round $j+1$. There are two possibilities.

\begin{enumerate}
    \item If $I_{j+1}=1$, then the round is wasted: $Y_{j+1}\in H_j$, and no new optimal cluster is covered. Hence,
    $$
    \U_{j+1}=\U_j.
    $$
    Since $C_j\subseteq C_{j+1}$, the weight of every uncovered cluster can only decrease:
    $$
    0\leq w_{j+1}(A)
    =
    \phi(A,C_{j+1})
    \leq
    \phi(A,C_j)
    =
    w_j(A)
    $$
    for every $A\in\U_j$.

    \item If $I_{j+1}=0$, then $Y_{j+1}\in U_j$. Because the optimal clusters form a partition, there is a unique cluster $A^\star\in\U_j$ containing $Y_{j+1}$. This cluster becomes covered and is removed:
    $$
    \U_{j+1}
    =
    \U_j\setminus\{A^\star\}.
    $$
    Conditioned on \(I_{j+1}=0\), the cluster \(A^\star\) is chosen according to \(\pi_j\), which satisfies \eqref{eq:conditional-cluster-noise}. Furthermore, the weights of all surviving clusters can only decrease:
    $$
    0\leq w_{j+1}(A)\leq w_j(A)
    $$
    for every $A\in\U_{j+1}$.
\end{enumerate}

If \(\Pr(I_{j+1}=0\mid\F_j)=0\), only the first possibility occurs, which is also allowed by the sampling game. Thus, conditional on \(\F_s\), the entire evolution from round \(s\) to round \(k\) is an instance of the \(\beta\)-sampling game with \(\beta=r\).

Applying Theorem~\ref{thm:improved-sampling} to this conditional process gives
$$
\E[\rho_k\mid\F_s]
\leq
\beta^2\rho_s
=
r^2\rho_s,
$$
as claimed.
\end{proof}

\paragraph{Discussion.}
The lemma shows that the expected average cost of uncovered clusters increases by at most \(B(k,\eps)=r^2\). \cite{behs2020} proved the weaker bound \(B_1(k,\eps)=4r\ln k+2\) for sufficiently large \(k\), and \cite{gor2023} later obtained \(B_2(k,\eps)\leq\frac{90}{(1-e^{-1/40})^2}\approx147{,}638\). Our dynamic-threshold argument yields \(r^2=1+O(\eps)\).

The potential bound and Lemma~\ref{lem:future-uncovered} combine through the following standard consequence of the potential analysis.

\begin{lemma}\label{lem:uncovered-upper-bound}(\citet[Lemma 8]{behs2020})
For \(B(k,\eps)=r^2\) as in Lemma~\ref{lem:future-uncovered},
\[
\E[\phi(U_k,C_k)]\leq B(k,\eps)\E[\Psi_k].
\]
\end{lemma}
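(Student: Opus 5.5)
The plan is to relate the final uncovered cost to the potential \(\Psi_k\) through a telescoping argument over the rounds in which the number of uncovered clusters stays the same. Let \(\ell\) be the final round (\(\ell=k\)).

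First note that \(u_k\geq 0\). We have \(u_k = \#\{\text{wasted rounds}\}\) when \(\ell\) counts all rounds: the first round always covers a cluster, and every non-wasted round covers exactly one new cluster. So \(u_k = k - (\text{number of non-wasted rounds}) = \sum_{j=2}^k I_j\). This gives the pathwise identity
\[
\phi(U_k,C_k)=u_k\rho_k=\sum_{j=2}^k I_j\rho_k .
\]

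Taking expectations and conditioning term by term on \(\F_j\) (with \(I_j\) being \(\F_j\)-measurable), I will get
\[
\E[I_j\rho_k]=\E\bigl[I_j\,\E[\rho_k\mid\F_j]\bigr]\leq r^2\,\E[I_j\rho_j]
\]
for each \(2\leq j\leq k\). This uses Lemma~\ref{lem:future-uncovered} with \(s=j\), and the case \(j=k\) is trivial. The step needs \(I_j\geq0\) and that \(I_j\) factors out of the conditional expectation. Summing over \(j\) then gives \(\E[\phi(U_k,C_k)]\leq r^2\E[\Psi_k]=B(k,\eps)\E[\Psi_k]\).

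The main obstacle I expect is the counting identity \(u_k=\sum_{j\ge2} I_j\). It requires that \(u_1=k-1\), since the first center covers exactly one cluster, and that each non-wasted round decreases \(u\) by exactly one, while wasted rounds leave \(u\) unchanged (Definition~\ref{def:cover-uncover}). With \(k-1\) further rounds this yields \(u_k=(k-1)-\sum_{j\geq2}(1-I_j)=\sum_{j\geq2}I_j\).

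A secondary check is the convention \(\rho_j=0\) when \(u_j=0\). If \(u_k=0\), the identity still holds because both sides are \(0\): the left side is zero and every \(I_j\) equals \(0\).
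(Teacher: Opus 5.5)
Your proof is correct, and it is the standard argument behind \citet[Lemma~8]{behs2020}, which the paper cites without reproving. The counting identity \(u_k=\sum_{j=2}^k I_j\) gives the pathwise decomposition \(\phi(U_k,C_k)=\sum_{j=2}^k I_j\rho_k\). Conditioning each term on \(\F_j\) (with \(I_j\geq 0\) being \(\F_j\)-measurable) and applying Lemma~\ref{lem:future-uncovered} with \(s=j\) then yields \(\E[\phi(U_k,C_k)]\leq r^2\,\E[\Psi_k]\), and your edge cases (\(u_k=0\) and \(j=k\)) are handled correctly.
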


\paragraph{Proof of Theorem~\ref{thm:main}.}
Because \(H_k\) and \(U_k\) partition \(X\),
\[
  \phi(X,C_k)
  =
  \phi(H_k,C_k)+\phi(U_k,C_k).
\]
Applying Lemmas~\ref{lem:covered-clusters},~\ref{lem:uncovered-upper-bound}, and~\ref{lem:upper-bound-potential} gives
\begin{align*}
  \E[\phi(X,C_k)]
  &\le
  8r\cdot \opt_k(X)+B(k,\eps) \cdot \E[\Psi_k]\\
  &\le
  8r \cdot \opt_k(X)
  +8r^4 \cdot ( \ln k+1)\cdot \opt_k(X)\\
  &\leq
  8 (\ln k+2)\left( \frac{1+\eps}{1-\eps} \right)^4 \opt_k(X).
\end{align*}

Finally,
\[
\left(\frac{1+\eps}{1-\eps}\right)^4
\leq1+160\eps
\qquad\text{for }0\leq\eps\leq\frac12,
\]
which follows, for example, by convexity and comparison with the chord joining the endpoints. This completes the proof.

\section{Lower Bounds}
\label{sec:lower-bounds}

We complement the upper bound with two lower bounds.  The first shows
that a linear loss in the multiplicative-noise parameter is unavoidable.
The second shows that replacing pointwise multiplicative control by total
variation closeness destroys every instance-independent approximation
guarantee, already for \(k=2\).

\subsection{Linear dependence on $\eps$ is necessary}

Let $\mathcal{H}_k = \sum_{m=1}^k1/m$. We consider the lower bound family of well-separated high-dimensional simplices of \cite{av2007}\footnote{See also the simplified analysis of this lower bound instance by \cite{aggarwal2009}. } and construct an adversary that multiplies the probability of a round being wasted by a factor of $(1+\eps)$. We denote expectation under this adversary by $\E_\eps[\cdot ]$ and for exact $k$-means++ by $\E_0[\cdot]$. Our lower bound result is as follows: 

\begin{theorem} \label{thm:linear-noise-lower-bound}
For every $k \geq 2$ and $0 \leq \eps <1$, there exists a family $X := X_{\delta,\Delta}$ of Euclidean $k$-means instances and a valid $\eps$-noisy adversary such that: 
$$ \lim_{\Delta / \delta \to \infty} \frac{\E_\eps[\phi(X,C)]}{\E_0[\phi(X,C)]} \ge 1 + \frac{\mathcal{H}_k-1}{\mathcal{H}_k} \eps \geq 1 + \frac{\eps}{3}$$
Consequently, a multiplicative loss of $1+ \Omega(\eps)$ with respect to exact $k$-means++ is unavoidable.  
\end{theorem}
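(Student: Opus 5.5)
We will use the simplex instance of \cite{av2007}. Fix $k\ge2$ and $n\ge2$. Take $k$ mutually orthogonal (hence well-separated) regular simplices $A_1,\dots,A_k$ in a high-dimensional Euclidean space. Each has $n$ points with pairwise distance $\delta$. Their centroids are the vertices of a regular simplex of side $\Delta$, with the inner simplices in coordinates orthogonal to the outer one. Then $\opt_1(A_\nu)=(n-1)\delta^2/2$. A point of $A_\nu$ is at squared distance $\delta^2$ from any other point of $A_\nu$ and at squared distance $\Delta^2+\Theta(\delta^2)$ from any point of another cluster. We fix $n,\delta$, let $\Delta/\delta\to\infty$, and normalize all costs by $(n-1)\delta^2$.

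The adversary is as follows. In round $1$ it uses the uniform distribution. In every later round it multiplies $q_{j}(x)=(1+\eps)p_{C_{j-1}}(x)$ for every $x$ in a covered cluster. It scales all points of uncovered clusters by the common factor $1-\eps P_{\mathrm{cov}}/P_{\mathrm{unc}}$, where $P_{\mathrm{cov}},P_{\mathrm{unc}}$ are the exact $D^2$-masses of $H_{j-1}$ and $U_{j-1}$. The result is normalized, and it is valid as soon as $P_{\mathrm{cov}}\le P_{\mathrm{unc}}$. This holds for $\Delta/\delta$ large, since $P_{\mathrm{cov}}/P_{\mathrm{unc}}=O(\delta^2/\Delta^2)$. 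Under this adversary the probability that round $j$ is wasted is exactly $(1+\eps)$ times the exact one at the same state.

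Next we expand both expectations to first order in $\delta^2/\Delta^2$. Let $\Ev$ be the event that no round is wasted. On $\Ev$, every cluster receives exactly one center, which is one of its own points, so the cost is $k(n-1)\delta^2$. Consider round $j\ge2$ with $j-1$ covered clusters, each holding one center, and $u=k-j+1$ uncovered clusters. The exact wasted probability is
\[
\frac{(j-1)(n-1)\delta^2}{u\,n(\Delta^2+O(\delta^2))}.
\]
A single wasted round leaves exactly one cluster uncovered at the end, costing $n\Delta^2(1+o(1))$. The events of two or more wasted rounds, and the $O(\delta^2)$ changes to covered costs caused by a wasted center, contribute $o((n-1)\delta^2)$ after multiplying by costs. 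Summing over $j$ with $u=k-j+1$ gives
\[
\sum_{u=1}^{k-1}\frac{k-u}{u}=k\mathcal{H}_{k-1}-(k-1)=k(\mathcal{H}_k-1).
\]
Hence
\[
\E_0[\phi]=(n-1)\delta^2\bigl(k+k(\mathcal{H}_k-1)\bigr)(1+o(1))=(n-1)\delta^2 k\mathcal{H}_k(1+o(1)),
\]
\[
\E_\eps[\phi]=(n-1)\delta^2\bigl(k+(1+\eps)k(\mathcal{H}_k-1)\bigr)(1+o(1)).
\]
Taking the ratio and the limit gives $1+\eps(\mathcal{H}_k-1)/\mathcal{H}_k$. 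Since $(\mathcal{H}_k-1)/\mathcal{H}_k$ is increasing in $k$ and equals $1/3$ at $k=2$, the second inequality follows.

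The main obstacle is making the $o(\cdot)$ bookkeeping rigorous. We must show the total contribution of histories with at least two wasted rounds is lower order, and the same for the perturbations of covered-cluster costs. The first point needs care: a wasted round makes later rounds have an extra uncovered cluster of weight $\approx n\Delta^2$, which increases later wasted probabilities only slightly. We plan to bound the probability of two or more wasted rounds by $O((\delta/\Delta)^4)$ times a constant depending on $k,n$, uniformly for both adversaries (the noisy one at most inflates each factor by $1+\eps\le2$). The cost on that event is at most $kn(\Delta^2+\delta^2)$, so its contribution is $O(\delta^4/\Delta^2)=o(\delta^2)$. The remaining terms are exact finite sums over which cluster and point is hit, and they can be expanded directly.
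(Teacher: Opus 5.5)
Your proposal is correct, and it uses the same instance and adversary as the paper. The paper only reparametrizes the centroid spacing so that every cross-cluster squared distance is exactly $\Delta^2$; in your construction it is exactly $\Delta^2+\frac{n-1}{n}\delta^2$, which is also a constant. Where you differ is the analysis.

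The paper uses this exact equidistance to write the final cost as an affine function of the number $I$ of wasted rounds, $\phi(X,C)=k(m-1)\delta^2+mI(\Delta^2-\delta^2)$. The ratio then depends only on $\E_0[I]$ and $\E_\eps[I]$, and the paper bounds these with two one-sided, non-asymptotic inequalities:
\begin{itemize}
\item $\E_0[I]\le B$, by linearity and a history-independent bound on each round's waste probability (using $h\le t$);
\item $\E_\eps[I]\ge\Pr[I\ge1]=1-\prod_t\bigl(1-(1+\eps)w[t,t]\bigr)\ge 1-e^{-\eps B}/(1+B)$, computed along the unique no-waste path.
\end{itemize}
With this argument the histories with two or more wasted rounds never have to be estimated. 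They are simply discarded in the lower bound and absorbed by linearity in the upper bound. The result is an explicit inequality for every finite $\Delta/\delta$, followed by $B\to0$.

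Your two-sided first-order expansion instead needs the remainder estimate you sketch. That estimate does go through. Before any round $j\le k$, some uncovered cluster of cost at least $n\Delta^2$ remains, so each round's waste probability is $O(\delta^2/\Delta^2)$ uniformly over histories. Hence at least two wastes occur with probability $O(\delta^4/\Delta^4)$, contributing $o(\delta^2)$.

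In return, your route computes both expected costs to leading order. It therefore shows that the limit in the statement actually exists and equals $1+\eps(\mathcal{H}_k-1)/\mathcal{H}_k$. This means the bound is tight for this instance and adversary, whereas the paper's argument only bounds the limit inferior from below.

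One simplification: since your cross distances are exactly constant, your cost is also exactly affine in $I$. You can therefore drop the separate bookkeeping of ``$O(\delta^2)$ changes to covered costs.''
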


    \emph{Construction.} Fix the number of clusters $k \geq 2$. Let the number of points be $n := km$ so that each cluster $A_1,\dots,A_k$ has exactly $m = n/k$ points in a simplex of side length $\delta$. Place the simplices in mutually orthogonal subspaces and choose their centroids $c_1,\dots,c_k$ separated by pairwise squared distance $\Delta^2 - \frac{m-1}{m}\delta^2$. For sufficiently large $\Delta/\delta$, the optimal clusters are indeed $A_1,\dots,A_k$. Moreover, this construction satisfies the following property: 

    $$ \| x-y\|^2 = \begin{cases}
        \delta^2 &: x,y \in A_i, x\neq y \\ \Delta^2&: x \in A_i, y \in A_j, i \neq j
    \end{cases} $$

    \emph{Adversary.} The adversary leaves the first sample (which is sampled uniformly at random) unchanged. Suppose the set of centers sampled until now is $C$. Recall that $C$ will partition the optimal clusters into covered or hit clusters $H$ and uncovered clusters $U$ such that $X = H \cup U$. Suppose $p$ is the exact $D^2$ distribution with respect to $C$. Let $p(H) := \sum_{x \in H} p(x)$. The adversary chooses the noisy distribution $q$ as follows: 

    $$ q(x) := \begin{cases}
        (1+\eps) \cdot p(x) &: x \in H \\\\ 
        \left(1 - \eps \cdot \frac{p(H)}{1 - p(H)}\right) \cdot p(x) &: x \in U
        
    \end{cases} $$

Let us see under which conditions the above distribution is valid. Since there is at least one uncovered cluster before the last center is chosen, we have $p(H) < 1$. Since $H \subset X$ and each point $x \in H$ can contribute at most $\delta^2$ to the cost, we have $\phi(H,C) \leq n \delta^2$. Similarly since there is at least one uncovered cluster, each point $x \in U$ contributes $\Delta^2$ cost, we have $\phi(U,C) \geq m\Delta^2$. Hence 
$$p(H)= \frac{\phi(H,C)}{\phi(X,C)} \leq \frac{\phi(H,C)}{\phi(U,C)} \leq \frac{n\delta^2}{m\Delta^2} = k \cdot \frac{\delta^2}{\Delta^2} \leq \frac{1}{2}$$

whenever $\Delta^2/\delta^2 \geq 2k$. This implies $\frac{p(H)}{1 - p(H)} \leq 1$ so that $\left( 1- \eps \cdot \frac{p(H)}{1-p(H)}\right) \geq 1 - \eps$. So indeed for each $x \in X$ we have $q(x) \in (1 \pm \eps)p(x)$. We assume that this condition holds for the rest of the proof.\\

To analyse the behaviour of $D^2$-sampling, it is crucial to see that the \emph{state} of the algorithm can be adequately described by two parameters: the total number of centers sampled by the algorithm until now: $t$ and the total number of optimal clusters \emph{hit} until now: $h$. 
\begin{lemma}\label{lem:cost}
Suppose that during the execution of Algorithm~\ref{alg:noisy} on the instance described above, the current set of centers $C$ has $t$ centers and $h$ out of the $k$ optimal clusters have been hit until now. Then, the $k$-means cost of $X$ with respect to $C$ is given by: 
$$ \phi(X,C)  = \Phi[h,t] := (mh-t)\delta^2 + (k-h)m\Delta^2$$
\end{lemma}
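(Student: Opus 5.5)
The plan is a direct count, splitting $X$ into the $h$ hit clusters and the $k-h$ unhit clusters and summing $\min_{c\in C}\|x-c\|^2$ point by point. The only facts needed are the distance pattern recorded in the construction: $\|x-y\|^2=\delta^2$ for distinct $x,y$ in the same $A_i$, and $\|x-y\|^2=\Delta^2$ for $x,y$ in different clusters. We also use that every center in $C$ is a point of $X$, since the algorithm samples from $X$.

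First, the hit clusters. Fix a hit cluster $A_i$ and let $t_i\geq 1$ be the number of distinct centers of $C$ lying in $A_i$. Each such center contributes $0$. Every other point $x\in A_i$ has some center in $A_i$ at squared distance $\delta^2$, and every center outside $A_i$ is at squared distance $\Delta^2\geq\delta^2$. For large $\Delta/\delta$ this inequality is strict; in any case we always have $\Delta^2\geq\delta^2$, since the centroid separation makes $\Delta^2\geq\delta^2$. So each such $x$ contributes exactly $\delta^2$, and $\phi(A_i,C)=(m-t_i)\delta^2$.

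Next, the unhit clusters. For an unhit cluster, no center lies in it, so every point is at squared distance exactly $\Delta^2$ from every center. This gives $\phi(A_i,C)=m\Delta^2$.

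Summing over all clusters, and using $\sum_{i\text{ hit}} t_i=t$ because every center lies in exactly one cluster, gives $\phi(X,C)=(mh-t)\delta^2+(k-h)m\Delta^2=\Phi[h,t]$.

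The one subtlety, and the only real obstacle, is that $t$ must count distinct centers. The noisy distribution $q$ is multiplicatively close to $p$, and $p(x)=0$ at existing centers, so a center is never resampled while the cost is positive. Hence $|C|=t$ with no repeats. If a round did resample, $t$ should be read as $|C|$.
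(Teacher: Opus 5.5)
Your proof is correct and matches the paper's argument: split $X$ into hit and unhit clusters, charge $(m-s_\ell)\delta^2$ to each hit cluster and $m\Delta^2$ to each unhit one, then sum using $\sum_\ell s_\ell = t$. One small correction: $\Delta^2\ge\delta^2$ is genuinely needed, and you are right to flag it (the paper uses it silently), but it does not follow from the centroid separation; nonnegativity of $\Delta^2-\frac{m-1}{m}\delta^2$ only gives $\Delta^2\ge\frac{m-1}{m}\delta^2$, so the fact must come from the standing assumption that $\Delta/\delta$ is large (e.g.\ $\Delta^2\ge 2k\delta^2$).
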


\begin{proof}
    Recall that the optimal clusters are $A_1,\dots,A_k$. Suppose that the set of centers $C$ has hit $h$ optimal clusters $A_{i_1},\dots,A_{i_h}$. Also suppose that each hit cluster $i_\ell$ has exactly $s_\ell$ centers from $C$ belonging to it i.e., $|C \cap A_{i_\ell}| = s_\ell$ so that $\sum_{\ell=1}^h s_\ell = t$. Now consider $A_{i_\ell}$, which has $m$ points, out of which $m - s_\ell$ contribute $\delta^2$ cost while $s_\ell$ of them contribute zero. Hence we have $\phi(A_{i_\ell},C) = (m-s_\ell)\delta^2$ so that $\phi(H,C) = \sum_{\ell = 1}^h \phi(A_{i_\ell},C) = \sum_{\ell = 1}^h (m-s_\ell)\delta^2 = (mh-t)\delta^2$. Now consider the uncovered clusters $U$: each point $x \in U$ contributes $\Delta^2$ to the cost. Hence $\phi(U,C) = (k-h)\cdot m\cdot \Delta^2$, This finally gives $\Phi[h,t] = \phi(X,C) = \phi(H,C)+\phi(U,C) = (mh-t)\delta^2 + (k-h)m\Delta^2$. 
\end{proof}

To compute the expected cost of $D^2$-sampling on this instance, we introduce the \emph{dynamic programming table} $F$ such that $F_\sigma[h,t]$ is the expected $k$-means cost, starting from a set of centers having $t$ total centers which have \emph{hit} exactly $h$ optimal clusters, and sampling the remaining $k-t$ centers; $\sigma = 0$ corresponds to using exact $D^2$ sampling  and $\sigma = \eps$ corresponds to sampling according to the $\eps$-noisy adversarial distribution which we had defined earlier. Observe that the first center chosen corresponds to $h=1,t=1$ and hence we can write $\E_\sigma[\phi(X,C)] = F_\sigma[1,1]$ where $C$ is the set of centers output by Algorithm~\ref{alg:noisy}. We get the following dynamic programming recurrence for $F_\sigma[h,t]$: 

\begin{lemma}\label{lem:dp-recurrence}
For $\sigma \in \{0,\eps\}$ let $F_\sigma[h,t]$ be the expected final cost conditioned on having already chosen $t$ centers all of which lie in $h$ distinct optimal clusters. For every state $(h,t)$ with $t < k$, we have: 
$$  F_\sigma[h,t] = (1+\sigma) \cdot w[h,t] \cdot F_\sigma[h,t+1] + \left(1 - (1+\sigma) \cdot w[h,t]\right) \cdot F_\sigma[h+1,t+1]$$
where $$w[h,t] := \frac{(mh-t)\delta^2}{(mh-t)\delta^2 + (k-h)m\Delta^2}$$
with the terminal condition
$$F_\sigma[h,k] = (hm-k)\delta^2 + (k-h)m\Delta^2 $$
\end{lemma}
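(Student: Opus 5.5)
We prove Lemma~\ref{lem:dp-recurrence} by conditioning on the outcome of the next sample, i.e.\ first-step analysis of the Markov chain on states $(h,t)$. First we check that $(h,t)$ really is a sufficient state. By Lemma~\ref{lem:cost}, the total cost $\phi(X,C)=\Phi[h,t]$ depends only on $(h,t)$. The same computation gives $\phi(H,C)=(mh-t)\delta^2$ and $\phi(U,C)=(k-h)m\Delta^2$. Hence the exact $D^2$-probability of landing in the covered region is
\[
p(H)=\frac{(mh-t)\delta^2}{(mh-t)\delta^2+(k-h)m\Delta^2}=w[h,t],
\]
regardless of how the $t$ centers are distributed among the $h$ hit clusters.

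Next we account for the adversary. For $\sigma=\eps$ it puts mass $(1+\eps)p(x)$ on each $x\in H$. Summing gives $q(H)=(1+\eps)w[h,t]$, and so $q(U)=1-(1+\eps)w[h,t]$. For $\sigma=0$ the same formulas hold with $\sigma=0$. This distribution is valid because the validity condition $\Delta^2/\delta^2\ge 2k$ holds, and because $p(H)<1$ whenever $t<k$: at most $t<k$ clusters are hit, so $h<k$.

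We then split on where the new center lands.
\begin{itemize}
\item It lands in $H$ with probability $(1+\sigma)w[h,t]$. No new cluster is hit, so the state becomes $(h,t+1)$.
\item It lands in $U$ otherwise. Since the sampled point lies in exactly one uncovered optimal cluster, the state becomes $(h+1,t+1)$.
\end{itemize}
In both cases we must argue that the conditional future expected cost equals $F_\sigma$ at the new state. This holds because the adversary's rule and the subsequent sampling law depend on the configuration only through $(h,t)$, as established above. More precisely, the process $(h_t,t)$ is a Markov chain whose transition probabilities are functions of $(h,t)$ alone. By the tower property, $F_\sigma[h,t]$ is therefore the stated convex combination.

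The terminal condition is the case $t=k$: no further sampling occurs, and the final cost is $\Phi[h,k]=(hm-k)\delta^2+(k-h)m\Delta^2$ by Lemma~\ref{lem:cost}.

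The main point needing care, rather than a real obstacle, is the well-definedness of $F_\sigma[h,t]$. The conditional expectation must not depend on which clusters were hit or on how the centers are spread among them. We handle this by induction on $k-t$. At $t=k$ the claim follows from Lemma~\ref{lem:cost}. Assuming it for $t+1$, the decomposition above shows that the conditional expectation at any configuration with parameters $(h,t)$ equals the right-hand side, which depends only on $(h,t)$. The same induction also confirms that the symmetry of the simplex instance (all intra-cluster distances $\delta^2$, all inter-cluster distances $\Delta^2$) makes the particular point chosen within $H$ or $U$ irrelevant.
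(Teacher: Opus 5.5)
Your proposal is correct and takes the same approach as the paper. Both compute the exact hit-region mass $p(H)=w[h,t]$ from Lemma~\ref{lem:cost}, scale it to $(1+\sigma)w[h,t]$ under the adversary, condition on whether the next center lands in $H$ (state $(h,t+1)$) or in $U$ (state $(h+1,t+1)$), and read the terminal condition off $\Phi[h,k]$; your backward induction on $k-t$, showing that $F_\sigma[h,t]$ does not depend on which clusters were hit or how the centers are spread among them, makes explicit a well-definedness point the paper leaves implicit.
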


\begin{proof}
Fix a state $(h,t)$ with $t<k$, and let $H$ and $U$ denote the unions of the hit and unhit optimal clusters, respectively. By Lemma~\ref{lem:cost},
\[
\phi(H,C)=(mh-t)\delta^2
\qquad\text{and}\qquad
\phi(U,C)=(k-h)m\Delta^2.
\]
Hence, under exact $D^2$-sampling, the probability that the next center lies in a hit cluster is
\[
p(H)
=
\frac{\phi(H,C)}{\phi(X,C)}
=
\frac{(mh-t)\delta^2}
     {(mh-t)\delta^2+(k-h)m\Delta^2}
=
w[h,t].
\]
For $\sigma=0$, this probability is $w[h,t]$. For $\sigma=\eps$, the adversary multiplies it by $1+\eps$. Thus, in either case, the probability of sampling from a hit cluster is
\[
(1+\sigma)w[h,t],
\]
and the probability of sampling from an unhit cluster is
\[
1-(1+\sigma)w[h,t].
\]

If the next center lies in $H$, then the number of hit clusters remains $h$, while the number of sampled centers increases to $t+1$. The resulting state is therefore $(h,t+1)$. If the next center lies in $U$, then exactly one previously unhit optimal cluster is hit, and the resulting state is $(h+1,t+1)$. Conditioning on these two events yields
\[
F_\sigma[h,t]
=
(1+\sigma)w[h,t]F_\sigma[h,t+1]
+
\left(1-(1+\sigma)w[h,t]\right)
F_\sigma[h+1,t+1].
\]

Finally, when $t=k$, no further centers are sampled. Lemma~\ref{lem:cost} therefore gives
\[
F_\sigma[h,k]
=
\Phi[h,k]
=
(hm-k)\delta^2+(k-h)m\Delta^2,
\]
which proves the recurrence and the terminal condition.
\end{proof}

To prove Theorem~\ref{thm:linear-noise-lower-bound}, we require a lower bound on $F_\eps[1,1] = \E_\eps[\phi(X,C)]$ and an upper bound on $F_0[1,1] = \E_0[\phi(X,C)]$. For obtaining this, we first introduce the random variable $I$, which is the number of \emph{wasted} rounds in an execution of Algorithm~\ref{alg:noisy}, so that at the end of the algorithm, there are exactly $I$ uncovered clusters and $k-I$ covered clusters. Using Lemma~\ref{lem:cost} we can directly write $\phi(X,C) = (m(k-I)-k)\delta^2 + Im\Delta^2 = k(m-1)\delta^2 + mI(\Delta^2-\delta^2)$ by substituting $t = k $ and $h = k-I$. Hence, the quantity which we want a bound on can be written as: 

\begin{lemma}\label{lem:cost-waste-relation}
    $$ F_\sigma[1,1] = k(m-1)\delta^2 + m(\Delta^2-\delta^2) \cdot \E_\sigma[I] $$
\end{lemma}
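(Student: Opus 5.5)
The identity follows from the deterministic cost formula of Lemma~\ref{lem:cost} evaluated at the end of the run, followed by linearity of expectation. The plan is to first express the final cost $\phi(X,C)$ as an affine function of the number $I$ of wasted rounds, and then take $\E_\sigma$ of both sides.

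First, I pin down the final state. The first center hits exactly one cluster, so we start in state $(1,1)$. Each of the remaining $k-1$ rounds either hits a new cluster or is wasted, as in the recurrence of Lemma~\ref{lem:dp-recurrence}. Hence after all $k$ centers are chosen we have $t=k$ and $h = 1 + (k-1-I) = k - I$. Here $I$ counts the wasted rounds among rounds $2,\dots,k$, that is, $I=\sum_{j=2}^k I_j$ in the notation of Definition~\ref{def:cover-uncover}. In particular exactly $I$ clusters remain uncovered at the end. This bookkeeping is the only point needing care: the first round is never wasted, which is why $h=k-I$ rather than $k-1-I$.

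Next, I substitute $(h,t)=(k-I,k)$ into Lemma~\ref{lem:cost}, which holds pathwise for every realization (it only uses the counts $h,t$, not which clusters were hit or how the hits are distributed). This gives
\[
\phi(X,C) = (m(k-I)-k)\delta^2 + I m \Delta^2 = k(m-1)\delta^2 + mI(\Delta^2-\delta^2).
\]
This is a deterministic affine function of the random variable $I$.

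Finally, by the definition of the table $F_\sigma$ and the observation that the first center corresponds to state $(1,1)$, we have $F_\sigma[1,1]=\E_\sigma[\phi(X,C)]$. Taking expectations under the sampling rule indexed by $\sigma$ and using linearity yields $F_\sigma[1,1] = k(m-1)\delta^2 + m(\Delta^2-\delta^2)\E_\sigma[I]$. No step is a real obstacle; the only subtlety is the off-by-one in relating $h$ to $I$ noted above.
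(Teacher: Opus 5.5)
Your proposal is correct and follows the paper's own argument. Substitute $(h,t)=(k-I,k)$ into Lemma~\ref{lem:cost} to obtain the pathwise identity $\phi(X,C)=k(m-1)\delta^2+mI(\Delta^2-\delta^2)$, then take $\E_\sigma$ using $F_\sigma[1,1]=\E_\sigma[\phi(X,C)]$; your explicit count $h=1+(k-1-I)$ just spells out the paper's remark that exactly $I$ clusters remain uncovered.
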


So, it remains to show a lower bound on $\E_\eps[I]$ and an upper bound on $\E_0[I]$: 

\begin{lemma} \label{lem:upper-lower-bound}
Putting $B := \frac{m-1}{m} \cdot \frac{k\delta^2}{\Delta^2} \cdot \sum_{\ell=2}^k \frac{1}{\ell}$ we have the following bounds: 
\begin{enumerate}
    \item $$\E_0[I] \leq B$$
    \item $$\E_\eps[I] \geq 1 - \frac{e^{-\eps B}}{1+B}$$
\end{enumerate}
\end{lemma}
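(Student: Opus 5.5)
The plan is to handle both bounds through the decomposition of $I$ along the execution, using the recurrence of Lemma~\ref{lem:dp-recurrence}. The quantity that drives everything is the odds ratio of the exact waste probability,
\[
x_t:=\frac{(m-1)t\,\delta^2}{(k-t)m\Delta^2},
\qquad 1\le t\le k-1 .
\]
A direct computation gives
\[
\sum_{t=1}^{k-1}\frac{t}{k-t}=\sum_{\ell=1}^{k-1}\frac{k-\ell}{\ell}=k\mathcal{H}_{k-1}-(k-1)=k(\mathcal{H}_k-1)=k\sum_{\ell=2}^{k}\frac1\ell ,
\]
so $\sum_{t=1}^{k-1}x_t=B$ exactly. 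Both parts reduce to comparing with this sum.

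\emph{Upper bound on $\E_0[I]$.} Write $I=\sum_{t=1}^{k-1}J_t$, where $J_t$ indicates that the $(t+1)$-st center is wasted. From state $(h,t)$, exact $D^2$-sampling wastes with probability $w[h,t]$, so $\E_0[I]=\sum_{t=1}^{k-1}\E_0[w[h_t,t]]$, where $h_t$ is the random number of hit clusters after $t$ centers. Dropping the $\delta^2$ term in the denominator gives
\[
w[h,t]\le\frac{(mh-t)\delta^2}{(k-h)m\Delta^2}.
\]
Since $h\le t$, we have $mh-t\le (m-1)h$. Moreover $h\mapsto h/(k-h)$ is increasing, so $w[h,t]\le \frac{(m-1)h\,\delta^2}{(k-h)m\Delta^2}\le x_t$ pointwise. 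Summing over $t$ then gives $\E_0[I]\le\sum_t x_t=B$.

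\emph{Lower bound on $\E_\eps[I]$.} Since $I$ is a nonnegative integer, $\E_\eps[I]\ge\Pr_\eps(I\ge1)=1-\Pr_\eps(I=0)$. The event $I=0$ is the single path $(1,1)\to(2,2)\to\cdots\to(k,k)$. By Lemma~\ref{lem:dp-recurrence},
\[
\Pr_\eps(I=0)=\prod_{t=1}^{k-1}\bigl(1-(1+\eps)w[t,t]\bigr).
\]
On the diagonal, $w[t,t]=x_t/(1+x_t)$ exactly, hence $1-w[t,t]=1/(1+x_t)$. Factoring out $1-w[t,t]$ gives
\[
1-(1+\eps)w[t,t]=(1-w[t,t])\Bigl(1-\eps\frac{w[t,t]}{1-w[t,t]}\Bigr)=\frac{1-\eps x_t}{1+x_t}\le\frac{e^{-\eps x_t}}{1+x_t}.
\]
Every factor is nonnegative by the validity condition $p(H)\le 1/2$. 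Taking the product and using $\prod_t(1+x_t)\ge 1+\sum_t x_t=1+B$ (expand the product; all terms are nonnegative) yields $\Pr_\eps(I=0)\le e^{-\eps B}/(1+B)$, which is the claimed bound.

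The main obstacle is choosing a lower bound for $\E_\eps[I]$ that is tractable yet retains the first-order $\eps$ gain. Analyzing all paths through the table $F_\eps$ is messy, so the plan restricts to the single no-waste path, where the odds-ratio identity $1-w=1/(1+x)$ makes the product collapse cleanly. A quick expansion confirms that nothing is lost to first order: for small $B$, $1-e^{-\eps B}/(1+B)\approx(1+\eps)B$, which is what the comparison with $\E_0[I]\le B$ in Lemma~\ref{lem:cost-waste-relation} requires.
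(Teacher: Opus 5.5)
Your proposal is correct and follows essentially the same route as the paper. For the exact case, you use the pointwise bound $w[h,t]\le \frac{m-1}{m}\frac{t}{k-t}\frac{\delta^2}{\Delta^2}$ summed over $t$. For the noisy case, you use $\E_\eps[I]\ge 1-\Pr_\eps(I=0)$, with the diagonal-path product $\prod_t \frac{1-\eps b[t]}{1+b[t]}$ bounded via $1-x\le e^{-x}$ and $\prod_t(1+b[t])\ge 1+B$. You are slightly more explicit than the paper on three points: the step $mh-t\le (m-1)h$ together with the monotonicity of $h/(k-h)$, the conditioning on the random $h_t$, and the nonnegativity of the factors needed to multiply the inequalities.
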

\begin{proof}
    It will help to define the random variable $I$ formally first. Define the indicator random variable $I_t$ so that $I_t = 0$ when a new cluster is hit in the $(t+1)$st round and $1$ otherwise. Hence we may write $I = \sum_{t=1}^{k-1} I_t$. Suppose that after round $t$, $h$ clusters have already been hit. We have
    $$ \E_0[I_t] = \Pr[I_t=1] = w[h,t] = \frac{(mh-t)\delta^2}{(mh-t)\delta^2 + (k-h)m\Delta^2} \leq \frac{(mh-t)\delta^2}{(k-h)m\Delta^2} \leq \frac{t}{k-t} \frac{m-1}{m} \frac{\delta^2}{\Delta^2}$$
    where we used $h\leq t$ and $t\leq k$. Now

$$\E_0[I] = \sum_{t=1}^{k-1}\E_0[I_t] \leq \frac{m-1}{m} \frac{\delta^2}{\Delta^2} \sum_{t=1}^{k-1} \frac{t}{k-t} = B$$

Now consider the lower bound for the noisy case. Markov's inequality gives $$\E_\eps[I] \geq \Pr[I \geq 1]=1-\Pr[I = 0]$$
Note that $I = 0$ corresponds to each new center being sampled being from a previously uncovered cluster. Under the adversary, this happens at round $t$ with probability $1 - (1+\eps) w[t,t]$; here $h = t$, which holds on the event $I = 0$. Now consider:

$$ 1 - (1+\eps) w[t,t] = 1-(1+\eps) \frac{b[t]}{1+b[t]} = \frac{1 - \eps b[t]}{1+ b[t]}$$

where $b[t]:= \frac{m-1}{m} \frac{t}{k-t} \frac{\delta^2}{\Delta^2}$. Multiplying these probabilities gives\footnote{We use the fact that $1-x \leq e^{-x}$ for $0 \leq x \leq 1$ and for positive numbers $x_1,\dots,x_\ell$ we have $1 + \sum_{t=1}^\ell x_t \leq \prod_{t=1}^\ell (1+x_t)$.}: 

$$\Pr[I = 0] = \prod_{t=1}^{k-1} \frac{1 - \eps b[t]}{1 + b[t]} \leq \prod_{t=1}^{k-1} \frac{e^{-\eps b[t]}}{1+b[t]} \leq \frac{e^{-\eps \sum_{t=1}^{k-1} b[t]}}{1 + \sum_{t=1}^{k-1} b[t]} = \frac{e^{-\eps B}}{1+B}$$
This completes the proof.
\end{proof}

Now we have all the ingredients for the proof of Theorem~\ref{thm:linear-noise-lower-bound}. Note that as $m \to \infty$ and $\frac{\Delta^2}{\delta^2} \to  \infty$, $B\to 0$. So we have from Lemma~\ref{lem:cost-waste-relation}: 

$$ \frac{F_\eps[1,1]}{F_0[1,1]} = \frac{k(m-1)\delta^2 + m(\Delta^2-\delta^2) \cdot \E_\eps[I]}{k(m-1)\delta^2 + m(\Delta^2-\delta^2) \cdot \E_0[I]} \geq \frac{1 + (1 - \delta^2/\Delta^2)(\mathcal H_k-1) \frac{\left(1 - \frac{e^{-\eps B}}{1+B}\right)}{B}}{1 + (1 - \delta^2/\Delta^2)(\mathcal H_k-1)}$$

Taking the limit $\Delta^2/\delta^2 \to \infty$ we obtain: 

$$ \frac{F_\eps[1,1]}{F_0[1,1]} \geq 1+\eps \frac{\mathcal{H}_k-1}{\mathcal{H}_k} \geq 1 + \frac{\eps}{3}$$

This completes the proof. 

\subsection{Total variation closeness gives no guarantee}
\label{sec:tv-lower-bound}

For distributions \(p,q\) on a finite set $X$, recall that
\[
 \|p-q\|_{\mathrm{TV}}
 :=\frac12\sum_{x\in X}|p(x)-q(x)|.
\]
Consider the relaxation of noisy \(k\)-means++ in which the
distribution \(q_j\) used in every round is only required to satisfy
\(\|q_j-p_{C_j}\|_{\mathrm{TV}}\leq\eta\).

\begin{theorem}
\label{thm:tv-no-guarantee}
For every \(0<\eta<1\) and every \(M<\infty\), there is a Euclidean
\(2\)-means instance and a support-preserving sampling rule whose
nontrivial sampling round is within total variation distance
\(\eta\) of exact \(D^2\)-sampling, but whose expected approximation
ratio is greater than \(M\).  Thus no finite approximation guarantee
depending only on \(k\) and \(\eta\) is possible under this
relaxation.
\end{theorem}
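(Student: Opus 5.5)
We construct a two-cluster instance in which almost all of the optimal cost sits in one huge cluster. The perturbed sampler moves only an \(\eta\) fraction of probability mass, but uses it to make the round that should hit the tiny far-away cluster almost never do so. Concretely, take \(k=2\). Let \(X\) consist of a ``big'' cluster \(A_1\) of \(N\) points spread with internal \(1\)-means cost \(\opt_1(A_1)=\Theta(N)\), for example \(N\) points at unit distance from a common centroid in orthogonal directions. Let the second cluster \(A_2=\{z\}\) be a single point at squared distance \(D\) from the centroid of \(A_1\), so that \(\opt_2(X)\le\opt_1(A_1)=\Theta(N)\). We choose \(D\) so that \(\phi(z,C_1)\) is large compared with \(\opt_2(X)\) but still carries only a small share of the \(D^2\) mass. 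We take the first (uniform) round exactly, so with probability \(N/(N+1)\) we have \(Y_1\in A_1\).

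The key computation is the second round, conditioned on \(Y_1\in A_1\). Here \(\phi(A_1,C_1)=\Theta(N)\) and \(\phi(z,C_1)\approx D\), so exact \(D^2\)-sampling picks \(z\) with probability \(p(z)\approx D/(D+\Theta(N))\). We pick \(D\) so that \(p(z)\le\eta\), i.e.\ \(D\approx \eta N\) up to constants. The modified rule \(q\) moves all but a tiny amount \(\gamma>0\) of the mass of \(z\) onto the points of \(A_1\), proportionally. Then \(\|q-p\|_{\mathrm{TV}}\le p(z)\le\eta\), and support is preserved because \(q(z)=\gamma>0\). On the first round we use \(q_1\) equal to uniform; if \(Y_1=z\), we sample exactly.

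The cost comparison runs as follows. With probability at least \((1-\tfrac1{N+1})(1-\gamma)\), both centers land in \(A_1\), so the output cost is at least \(\phi(z,C)\ge D-O(1)\), which is of order \(\eta N\). This alone gives a ratio of only \(\Theta(\eta)\), not \(M\), so the scaling must change. This is the main obstacle: we need \(\phi(z,C)\) to be large relative to \(\opt_2(X)\) while \(p(z)\) stays below \(\eta\). The way around it is to note that the comparison should be against \(\opt_2(X)\), which can be much smaller than \(\phi(A_1,C_1)\). We therefore make \(A_1\) itself consist of two tight, far-apart groups? That would need \(k=3\). Instead we keep \(A_1\) tight: \(N\) points within squared radius \(\delta^2\), so \(\opt_1(A_1)\approx N\delta^2\). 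The bulk of the mass in round two then has to come from somewhere else, which suggests replacing the single point by a far cluster \(A_2\) containing a few points near \(z\) and a heavy part. It is simpler to choose the geometry as follows.

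Let \(A_1\) be \(N\) copies (or a tiny simplex) near the origin and \(A_2\) a single point \(z\) at distance \(D\). After \(Y_1\in A_1\), exact sampling picks \(z\) with probability close to \(1\), so TV closeness cannot suppress it. The other order fixes this. Let \(A_1\) be a tight group of \(N\) points at the origin and \(A_2\) a tight group of \(L\) points at \(z\), with \(L\ll N\) but \(L\) large. If \(Y_1\in A_1\) (probability about \(1\)), round two picks \(A_2\) with probability about \(1\), so we cannot block it. Therefore the suppression has to be put into the \emph{first} round: \(q_1\) must also be TV-close to uniform. Choose \(L/(N+L)\le\eta\) and let \(q_1\) put nearly all of \(A_2\)'s mass on \(A_1\). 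Then with probability near \(1\) we get \(Y_1\in A_1\), after which round two is forced to hit \(A_2\), which gives a good solution. So we reverse roles: \(q_1\) puts its mass onto the small cluster \(A_2\) (\(L/(N+L)\) close to \(\eta\), so we move mass \(\eta-L/(N+L)\)). This gives \(Y_1\in A_2\) with probability about \(\eta\). In round two, \(D^2\)-sampling from \(A_2\)'s center picks \(A_1\) almost surely, unless the points of \(A_1\) carry only an \(\eta\) fraction of the mass. That is impossible since \(A_1\) is far away.

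The first-round route therefore fails, and I would return to round two and spread \(A_1\)'s internal cost out. This is the genuine difficulty. I would settle it with a three-scale construction: \(A_1\) has large internal spread \(R^2\) among \(N\) points and \(A_2=\{z\}\) sits at distance \(D\) with \(D\le \eta N R^2\). We then use \(\opt_2\le NR^2\) together with the fact that the algorithm's bad event leaves cost \(D\). This shows that the ratio \(D/(NR^2)\le\eta\) is bounded. The conclusion is that the bad event must be combined with blowing up \(A_1\)'s \emph{sampled} cost, not \(\opt\): the adversary, with its \(\eta\) mass, favours the outlier points of \(A_1\) so that \(Y_2\) lands on a far outlier \(o\in A_1\) while \(z\) remains unhit. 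The final step is to tune \(N,R,D\) so that this gives ratio \(>M\).
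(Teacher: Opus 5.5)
There is a genuine gap. The proposal never produces a construction: it ends with ``tune \(N,R,D\)'' and gives no concrete instance or sampling rule. Moreover, the strategy it pursues throughout cannot work. That strategy spends the \(\eta\) of movable mass to \emph{suppress} the good event (hitting the far cluster), hoping the bad event then occurs with probability close to \(1\). To see why this fails, take the first center uniform. The final cost is \(f(Y_2)=\varphi(X,C_1\cup\{Y_2\})\in[0,\varphi(X,C_1)]\). So any second-round rule with \(\|q-p\|_{\mathrm{TV}}\le\eta\) satisfies \(\E_q[f]\le\E_p[f]+\eta\,\varphi(X,C_1)\). Averaging over \(Y_1\) gives \(\E_q[\varphi(X,C)]\le\E_p[\varphi(X,C)]+2\eta\,\opt_1(X)\). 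Since exact \(k\)-means++ is \(O(1)\)-approximate for \(k=2\), a counterexample needs \(\opt_1(X)/\opt_2(X)\) unbounded. That is exactly the regime of your tight-groups paragraph. There, as you correctly noticed, exact \(D^2\)-sampling hits the far group with probability close to \(1\), so there is nothing to suppress. Your single-far-point attempt, where you found the ratio stays bounded, is an instance of this obstruction. The closing outlier idea is never made concrete, and by the bound above it can only succeed in an instance with \(\opt_1(X)\gg\opt_2(X)\).

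The missing idea is that the approximation ratio is an expectation. A bad event of probability only \(\eta\) suffices, provided its cost is unboundedly larger than \(\opt_2(X)\). So you should use the TV budget to \emph{amplify} the rare wasted round rather than suppress the likely good one. The paper takes \(A=\{a_1,a_2\}\) and \(B=\{b_1,b_2\}\), with intra-pair squared distance \(\delta^2\) and all cross squared distances \(\Delta^2\), so \(\opt_2(X)\le\delta^2\). After \(Y_1=a_1\), exact sampling gives \(p(a_2)=t=\delta^2/(\delta^2+2\Delta^2)\). Now set \(q(a_2)=t+\eta\) and \(q(b_1)=q(b_2)=(1-t-\eta)/2\), which is valid once \(\eta<1-t\). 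This rule has TV distance exactly \(\eta\) and preserves the support. With probability at least \(\eta\) it leaves \(B\) uncovered at cost \(2\Delta^2\), so the ratio is at least \(2\eta\Delta^2/\delta^2\), which exceeds \(M\) for large \(\Delta/\delta\). Your own tight-groups instance works the same way if you push \(\eta\) of mass from \(A_2\) onto the other points of \(A_1\). This requires the groups to be tiny simplices rather than exact copies, so that the receiving points have positive \(D^2\)-weight and the rule stays support-preserving.
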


\begin{proof}
Let
\[
 A=\{a_1,a_2\},\qquad B=\{b_1,b_2\},
\]
where the squared distance within each pair is \(\delta^2\), while
every cross-pair squared distance is \(\Delta^2\). Four such points
exist in \(\mathbb R^3\): for
\(L^2=\Delta^2-\delta^2/2\), take
\[
 a_1=(-\delta/2,0,0),\quad
 a_2=(\delta/2,0,0),\quad
 b_1=(0,-\delta/2,L),\quad
 b_2=(0,\delta/2,L).
\]
The centers at the midpoints of \(A\) and \(B\) have total cost
\(\delta^2\), and hence
\[
 \operatorname{OPT}_2(X)\leq\delta^2.
\]

Choose the first center uniformly, exactly as in \(k\)-means++.
Suppose without loss of generality that it is \(a_1\).  On the three
remaining points, exact \(D^2\)-sampling uses
\[
 p(a_2)=t:=\frac{\delta^2}{\delta^2+2\Delta^2},
 \qquad
 p(b_1)=p(b_2)=\frac{1-t}{2}.
\]
Choose \(\Delta/\delta\) large enough that \(\eta<1-t\), and define
\[
 q(a_2)=t+\eta,\qquad
 q(b_1)=q(b_2)=\frac{1-t-\eta}{2}.
 \]
Define the rule symmetrically for every possible first center.  It is
a normalized, support-preserving distribution, and
\[
 \|p-q\|_{\mathrm{TV}}
 =
 \frac12\left(\eta+\frac{\eta}{2}+\frac{\eta}{2}\right)
 =\eta.
\]

With probability \(q(a_2)=t+\eta\geq\eta\), both selected centers lie
in \(A\).  On this event both points of \(B\) are at squared distance
\(\Delta^2\) from the selected centers, so the final cost is
\(2\Delta^2\).  It follows that
\[
 \frac{\E_q[\varphi(X,C)]}{\operatorname{OPT}_2(X)}
 \geq
 2\eta\frac{\Delta^2}{\delta^2}.
\]
The right-hand side exceeds \(M\) for a sufficiently large
\(\Delta/\delta\).
\end{proof}

\begin{corollary}
\label{cor:vanishing-tv}
There is a sequence of \(2\)-means instances and support-preserving
sampling distributions \(q_R\) such that
\[
 \|q_R-p_R\|_{\mathrm{TV}}\longrightarrow0,
 \qquad
 \frac{\E_{q_R}[\varphi(X_R,C)]}
      {\operatorname{OPT}_2(X_R)}
 \longrightarrow\infty.
\]
\end{corollary}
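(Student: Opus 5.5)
The corollary will follow directly from Theorem~\ref{thm:tv-no-guarantee}, by letting the total variation budget shrink while the separation ratio \(\Delta/\delta\) grows fast enough to compensate. I will index the construction by \(R\in\mathbb{N}\) and take \(\eta_R:=1/R\), so that \(\eta_R\to0\) and \(\eta_R\in(0,1)\). On the four-point instance \(X_R\subset\R^3\) from the proof of Theorem~\ref{thm:tv-no-guarantee}, I will choose \(\Delta_R/\delta_R\) large enough that two conditions hold simultaneously: first, \(\eta_R<1-t_R\), where \(t_R=\delta_R^2/(\delta_R^2+2\Delta_R^2)\); second, \(2\eta_R\Delta_R^2/\delta_R^2\ge R\). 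A concrete choice is \(\delta_R=1\) and \(\Delta_R^2=R^2\). Then \(t_R\le 1/(2R^2)\), so for \(R\ge2\) the first condition holds, and \(2\eta_R\Delta_R^2/\delta_R^2=2R\ge R\).

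With these parameters I will define \(q_R\) exactly as in that proof. The first center is drawn uniformly, which coincides with exact \(D^2\)-sampling, so that round contributes TV distance \(0\). In the second round, mass \(\eta_R\) is shifted onto the partner of the first center, taking it symmetrically from the two far points. That proof already verifies that this rule is normalized and support-preserving, and that its TV distance from \(p_R\) is exactly \(\eta_R\). So \(\|q_R-p_R\|_{\mathrm{TV}}\le\eta_R\to0\) in every round; I read \(p_R\) and \(q_R\) round-wise, as in the theorem.

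For the ratio, I will reuse the estimate from the proof of Theorem~\ref{thm:tv-no-guarantee}. With probability at least \(\eta_R\), both centers land in the same pair, which leaves cost \(2\Delta_R^2\). Combined with \(\opt_2(X_R)\le\delta_R^2\), this gives
\[
\frac{\E_{q_R}[\phi(X_R,C)]}{\opt_2(X_R)}\ge 2\eta_R\frac{\Delta_R^2}{\delta_R^2}=2R\to\infty .
\]

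The only real obstacle is bookkeeping: we must check that the sequence of \(\Delta_R/\delta_R\) can be chosen to meet both validity of \(q_R\) and divergence of the ratio. The explicit choice above handles both, since \(\Delta_R^2\) grows quadratically while \(\eta_R\) decays only linearly.
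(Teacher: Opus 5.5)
Your proposal is correct and matches the paper's proof, which also instantiates the construction of Theorem~\ref{thm:tv-no-guarantee} with $R=\Delta/\delta$ and $\eta_R=1/R$, giving vanishing TV distance and approximation ratio at least $2R$. Your explicit check that $\eta_R<1-t_R$ holds for $R\ge 2$ is a small validity detail that the paper leaves implicit.
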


\begin{proof}
In the construction of Theorem~\ref{thm:tv-no-guarantee}, put
\(R=\Delta/\delta\) and \(\eta_R=1/R\).  Then the TV distance tends
to zero, whereas the expected approximation ratio is at least
\(2R\).
\end{proof}

\section{Conclusion and Discussion}
In this work, we studied the robustness of the $k$-means++ algorithm and resolved an open question posed by \cite{gor2023}. It would be very interesting to see if our techniques extend to noisy variants of other algorithms which use $D^2$-sampling such as \emph{greedy} $k$-means++ \citep{behs2020,gort} or $k$-means$\|$\citep{bahmani}.

\section*{Acknowledgments}

The author gratefully thanks Ragesh Jaiswal for his guidance and mentorship. The author thanks  Sandeep Silwal for discussions and comments on an earlier draft which helped improve the paper.\\

\emph{Declaration of AI use.}
The dynamic-threshold reformulation underlying this paper - replacing the fixed big/medium/small classes of \cite{gor2023} with threshold classes defined at every scale, and recovering the average uncovered cost via integration was conceived by the author. During the exploratory stage, the author used OpenAI's GPT-5.6-Sol to develop this idea into a concrete argument, including the supermartingale formulation and layer-cake integration step, and to generate a preliminary proof sketch. All model output was treated as unverified: the author independently reconstructed, verified, and wrote every lemma, proof, and line of exposition in the final manuscript. Anthropic's Claude Opus 5 was used afterward only as an editorial reviewer of the completed draft. No text from either system was copied into the manuscript, no AI tool was used for citations, and the author assumes full responsibility for all content.

\bibliographystyle{plainnat}
\bibliography{paper}

@inproceedings{av2007,
  author    = {Arthur, David and Vassilvitskii, Sergei},
  title     = {{k}-means++: The Advantages of Careful Seeding},
  booktitle = {Proceedings of the Eighteenth Annual ACM-SIAM Symposium on
               Discrete Algorithms},
  series    = {SODA '07},
  year      = {2007},
  pages     = {1027--1035},
  publisher = {Society for Industrial and Applied Mathematics},
  address   = {Philadelphia, PA, USA},
  url       = {https://theory.stanford.edu/~sergei/papers/kMeansPP-soda.pdf}
}

@inproceedings{behs2020,
  author    = {Bhattacharya, Anup and Eube, Jan and R{\"o}glin, Heiko and
               Schmidt, Melanie},
  title     = {Noisy, Greedy and Not So Greedy {k}-Means++},
  booktitle = {28th Annual European Symposium on Algorithms (ESA 2020)},
  series    = {Leibniz International Proceedings in Informatics (LIPIcs)},
  volume    = {173},
  year      = {2020},
  pages     = {18:1--18:21},
  doi       = {10.4230/LIPIcs.ESA.2020.18},
  publisher = {Schloss Dagstuhl--Leibniz-Zentrum f{\"u}r Informatik}
}

@misc{dasgupta2013,
  author       = {Dasgupta, Sanjoy},
  title        = {Lecture 3: Algorithms for {k}-Means Clustering},
  howpublished = {Course notes, University of California, San Diego},
  year         = {2013},
  url          = {https://cseweb.ucsd.edu/~dasgupta/291-geom/kmeans.pdf}
}

@inproceedings{gor2023,
  author    = {Grunau, Christoph and {\"O}z{\"u}do{\u{g}}ru, Ahmet Alper and
               Rozho{\v{n}}, V{\'a}clav},
  title     = {Noisy {k}-Means++ Revisited},
  booktitle = {31st Annual European Symposium on Algorithms (ESA 2023)},
  series    = {Leibniz International Proceedings in Informatics (LIPIcs)},
  volume    = {274},
  year      = {2023},
  pages     = {55:1--55:7},
  doi       = {10.4230/LIPIcs.ESA.2023.55},
  publisher = {Schloss Dagstuhl--Leibniz-Zentrum f{\"u}r Informatik}
}

@article{goldberg1991,
  author  = {Goldberg, David},
  title   = {What Every Computer Scientist Should Know About
             Floating-Point Arithmetic},
  journal = {ACM Computing Surveys},
  volume  = {23},
  number  = {1},
  pages   = {5--48},
  year    = {1991},
  doi     = {10.1145/103162.103163}
}

@article{pedregosa2011scikit,
  author  = {Pedregosa, Fabian and Varoquaux, Ga{\"e}l and Gramfort,
             Alexandre and Michel, Vincent and Thirion, Bertrand and
             Grisel, Olivier and Blondel, Mathieu and Prettenhofer,
             Peter and Weiss, Ron and Dubourg, Vincent and Vanderplas,
             Jake and Passos, Alexandre and Cournapeau, David and
             Brucher, Matthieu and Perrot, Matthieu and Duchesnay,
             {\'E}douard},
  title   = {Scikit-learn: Machine Learning in Python},
  journal = {Journal of Machine Learning Research},
  volume  = {12},
  number  = {85},
  pages   = {2825--2830},
  year    = {2011},
  url     = {https://www.jmlr.org/papers/v12/pedregosa11a.html}
}

@misc{sklearnKMeans,
  author       = {{The scikit-learn developers}},
  title        = {\texttt{sklearn.cluster.KMeans}},
  howpublished = {scikit-learn documentation},
  year         = {2026},
  url          = {https://scikit-learn.org/stable/modules/generated/sklearn.cluster.KMeans.html},
  note         = {Accessed July 27, 2026}
}

@incollection{williams1991martingales,
  author    = {Williams, David},
  title     = {Martingales},
  booktitle = {Probability with Martingales},
  chapter   = {10},
  pages     = {93--121},
  publisher = {Cambridge University Press},
  address   = {Cambridge},
  year      = {1991},
  isbn      = {978-0-521-40605-5}
}

@inproceedings{gort,
  author    = {Christoph Grunau and Ahmet Alper {\"O}z{\"u}do{\u{g}}ru and
               V{\'a}clav Rozho{\v{n}} and Jakub T{\v{e}}tek},
  title     = {A Nearly Tight Analysis of Greedy k-means++},
  booktitle = {Proceedings of the 2023 Annual {ACM-SIAM} Symposium on
               Discrete Algorithms ({SODA} 2023)},
  pages     = {1012--1070},
  publisher = {Society for Industrial and Applied Mathematics},
  year      = {2023},
  doi       = {10.1137/1.9781611977554.ch39}
}

@article{bahmani,
author = {Bahmani, Bahman and Moseley, Benjamin and Vattani, Andrea and Kumar, Ravi and Vassilvitskii, Sergei},
title = {Scalable k-means++},
year = {2012},
issue_date = {March 2012},
publisher = {VLDB Endowment},
volume = {5},
number = {7},
issn = {2150-8097},
url = {https://doi.org/10.14778/2180912.2180915},
doi = {10.14778/2180912.2180915},
journal = {Proc. VLDB Endow.},
month = mar,
pages = {622–633},
numpages = {12}
}

@inproceedings{makarychev2020,
  author = {Makarychev, Konstantin and Reddy, Aravind and Shan, Liren},
  booktitle = {Advances in Neural Information Processing Systems},
  editor = {H. Larochelle and M. Ranzato and R. Hadsell and M.F. Balcan and H. Lin},
  pages = {16142--16152},
  publisher = {Curran Associates, Inc.},
  title = {Improved Guarantees for k-means++ and k-means++ Parallel},
  url = {https://proceedings.neurips.cc/paper_files/paper/2020/file/ba304f3809ed31d0ad97b5a2b5df2a39-Paper.pdf},
  volume = {33},
  year = {2020}
}

@InProceedings{bachem2016approximate,
  author = {Bachem, Olivier and Lucic, Mario and Hassani, S. Hamed
            and Krause, Andreas},
  title = {Approximate K-Means++ in Sublinear Time},
  booktitle = {Proceedings of the Thirtieth AAAI Conference
               on Artificial Intelligence},
  pages = {1459--1467},
  year = {2016},
  publisher = {AAAI Press},
  doi = {10.1609/aaai.v30i1.10259},
  url = {https://ojs.aaai.org/index.php/AAAI/article/view/10259}
}

@InProceedings{bachem2016fast,
  author = {Bachem, Olivier and Lucic, Mario and Hassani, Hamed
            and Krause, Andreas},
  title = {Fast and Provably Good Seedings for k-Means},
  booktitle = {Advances in Neural Information Processing Systems},
  editor = {Lee, Daniel and Sugiyama, Masashi and Luxburg, Ulrike
            and Guyon, Isabelle and Garnett, Roman},
  publisher = {Curran Associates, Inc.},
  volume = {29},
  year = {2016},
  url = {https://proceedings.neurips.cc/paper_files/paper/2016/hash/d67d8ab4f4c10bf22aa353e27879133c-Abstract.html}
}

@InProceedings{cohenaddad2020fast,
  author = {Cohen-Addad, Vincent and Lattanzi, Silvio
            and Norouzi-Fard, Ashkan and Sohler, Christian
            and Svensson, Ola},
  title = {Fast and Accurate k-means++ via Rejection Sampling},
  booktitle = {Advances in Neural Information Processing Systems},
  editor = {Larochelle, Hugo and Ranzato, Marc'Aurelio and Hadsell, Raia
            and Balcan, Maria-Florina and Lin, Hsuan-Tien},
  pages = {16235--16245},
  publisher = {Curran Associates, Inc.},
  volume = {33},
  year = {2020},
  url = {https://proceedings.neurips.cc/paper/2020/hash/babcff88f8be8c4795bd6f0f8cccca61-Abstract.html}
}

@inproceedings{
shah2026fast,
title={Fast k-means Seeding Under The Manifold Hypothesis},
author={Poojan Chetan Shah and Shashwat Agrawal and Ragesh Jaiswal},
booktitle={Forty-third International Conference on Machine Learning},
year={2026},
url={https://openreview.net/forum?id=lgsMOTIAYu}
}

@inproceedings{aggarwal2009,
author = {Aggarwal, Ankit and Deshpande, Amit and Kannan, Ravi},
title = {Adaptive Sampling for k-Means Clustering},
year = {2009},
isbn = {9783642036842},
publisher = {Springer-Verlag},
address = {Berlin, Heidelberg},
url = {https://doi.org/10.1007/978-3-642-03685-9_2},
doi = {10.1007/978-3-642-03685-9_2},
booktitle = {Proceedings of the 12th International Workshop and 13th International Workshop on Approximation, Randomization, and Combinatorial Optimization. Algorithms and Techniques},
pages = {15–28},
numpages = {14},
location = {Berkeley, CA},
series = {APPROX '09 / RANDOM '09}
}

\end{document}